\theoremstyle{plain}
\newtheorem{thm}{thm}
\newtheorem{lem}{lem}
\newtheorem{pro}{pro}
\newtheorem{cor}{cor}
\theoremstyle{plain}
\newtheorem{defn}{defn}
\theoremstyle{plain}
\newtheorem{rem}{rem}
\newcommand{\citep}{\cite}
\newcommand{\F}{{\mathbb{F}}}
\newcommand{\C}{\mathcal{C}}
\newcommand{\D}{\mathcal{D}}
\newcommand{\A}{\mathbf{A}}
\newcommand{\Fq}{\mathbb{F}_q}
\renewcommand{\S}{\mathfrak{S}}
\renewcommand{\H}{\mathbf{H}}
\newcommand{\Om}{\Omega}
\newcommand{\rank}{\mathbf{rank}\;}
\newcommand{\PP}{\mathbb{P}}
\newcommand{\G}{\mathbf{G}}
\newcommand{\x}{\mathbf{x}}
\newcommand{\Fqm}{\mathbb{F}_{q^m}}
\newcommand{\y}{\mathbf{y}}
\renewcommand{\d}{\mathbf{d}}
\newcommand{\dr}{d^R}
\newcommand{\M}{\mathbf{M}}
\newcommand{\GLn}{GL_n}
\newcommand{\<}{\left<}
\renewcommand{\>}{\right>}
\renewcommand{\AA}{\mathbb{A}}
\newcommand{\qS}{\mathfrak{S}_q}
\newcommand{\0}{\mathbf{0}}
\newcommand{\N}{\mathbf{N}}
\newcommand{\s}{\mathbf{s}}
\newcommand{\w}{\mathbf{w}}
\newcommand{\B}{\mathbf{B}}
\renewcommand{\l}{\mathbf{\lambda}}
\newcommand{\m}{\mathbf{\mu}}
\newcommand{\e}{\epsilon}
\newcommand{\HC}{\mathcal{H}}
\renewcommand{\L}{\Lambda}
\author{Tovohery Randrianarisoa\thanks{Email: tovo@aims.ac.za - The author is supported by the Swiss SNF grant No. 181446.}}
\affil{Department of Mathematics, IIT Bombay}
\title{A Geometric Approach to Rank Metric Codes and a Classification of Constant Weight Codes}
\begin{document}

\maketitle

\begin{abstract}
In this work we develop a geometric approach to the study of rank metric codes. Using this method, we introduce a simpler definition for generalized rank weight of linear codes. We give a complete classification of constant rank weight code and we give their generalized rank weights. 
\end{abstract}

\section{Introduction}\label{sec:1}
In his paper \citep{Wei91}, Wei has introduced new parameters for linear block codes to characterize the performance of linear codes when such codes are used on the wiretap channel of type II. These parameters, called the Generalized Hamming Weights, were already studied earlier in \citep{HKM77} in a different context. Later, Cai and Yeung introduced in \citep{CY02} an equivalent scheme for secure network coding. 
Silva et al. considered this problem where they introduced the use of rank metric codes. 
Several works on generalized weight of rank metric codes appeared after this \citep{OS12,Duc15,KMU15,JP15,MP16,Rav16,JP17}. 
In these works, multiple notions of generalized rank weights were proposed. 
And ultimately, these definitions appeared to be equivalent. 
Continuing with these works, we consider a newer but more natural definition of generalized rank weights for rank metric codes. 
Our definitions are analogous to the definitions given by Wei in \citep{Wei91} for Hamming metric. 
Furthermore, we consider a geometric approach analogous to the work of Tsfasman and Vladut in \citep{TV95}. Such approach to study linear codes were already done in \cite{Hil78} and it was probably introduced much earlier.
Many results in this paper are translation of the definitions and results from \citep{Wei91,TV95} from the Hamming metric to the rank metric. 
This approach is done by using geometrical sets which we call $q$-system. It turns out that $q$-systems are the vectorial counterpart of linear sets \cite{Pol10,LV15} and although we did not know about them, they appear to be well studied geometric objects, there are many works about them and recently, results about the connection between linear sets and rank metric codes were presented \cite{Lun17,She16,CMPZ17,CMPZ19,SV19}.
Going back to Hamming metric codes, in their work, \citep{LC09}, Liu and Chen give some properties of constant weight linear codes. Another result of Bonisoli \citep{Bon83} also gives a characterization of constant weight linear codes.
These results give us a similar idea for the main result of this work, which is a complete classification of constant rank weight codes.

In Section \ref{sec:2}, we first recall some results in the Hamming metric setting, for us to see the analogy when we present our results for rank metric codes. 
In Section \ref{sec:3}, we introduce the notion of generalized rank weights for rank metric codes, both in analogy with the work in \citep{TV95} and \citep{Wei91}. 
In Section \ref{sec:4}, we follow the description of wiretap network codes from \citep{OS12} to show why our definition of generalized rank weights is proper for applications. 
In Section \ref{sec:5}, we will give some properties of the generalized rank weights. For instance we will see the monotonicity and the duality properties for these parameters.
In Section \ref{sec:6}, we recall the notion of linear sets and we explain why they are the projective version of the notion of $q$-system. We give a brief summary of the relations between linear rank metric codes and linear sets. 
In Section \ref{sec:7}, we will give a classification of constant rank weight codes. 
In fact if the dimension is at least $2$, then, up to equivalence, there is only one constant rank weight code. We give the construction for such constant rank weight codes.

Before we start let us define the following notations.

\begin{itemize}
\item For a field $\F$, $\F^{m\times n}$ denotes the set of all $m\times n$ matrices over $\F$.
\item For a matrix $\A$, $\A^T$ is its transpose.
\item For two matrices $\A$ and $\B$ of the appropriate size, $[\A|\B]$ is the concatenation of the two matrices columnwise.
\item For an $\F$-linear code (resp. $\F$-vector space) $\C$, $\D<\C$ means that $\D$ is an $\F$-subcode (resp. $\F$-subspace) of $\C$.
\item $\dim_\F V$ is the dimension of $V$ as an $\F$-vector space.
\item If $\A$ is a matrix over $\F$ with $n$ columns, then for $I\subset \{1,\cdots, n\}$, $\A_I$ denotes the matrix obtained from $\A$ by keeping only the $i$-th columns with $i\in I$.
\item $\{\A\}$ denotes the set vectors formed by the columns of $\A$.
\item For a field $\F$ and an $\F$-vector space $X$, the matrix $\G^X$ over $\F$ denotes a generator matrix of $X$ i.e. its rows form an $\F$-basis of $X$.
\item Conversely, for a field $\F$ and a matrix $\A$, $\<\A\>_\F$ denotes the $\F$-vector space generated by the rows of $\A$.
\item $\{\A\}_{\F} = \<\A^T\>_{\F}$ denotes the $\F$-vector space generated by the columns of $\A$.
\item For a set of vectors of the same length $X$, $[X]$ denotes a matrix with the elements of $X$ as columns (after fixing their order).
\item For an $n$ dimensional $\Fq$-subspace $X$ of $\Fqm^k$ (considered as a vector space over $\Fq$), we choose an arbitrary basis $\{P_1,\ldots,P_n\}$ of $X$ and $[X]_{\Fq}$ denotes the matrix with $P_1,\ldots,P_n$ as columns.
\end{itemize}

\section{Hamming metric codes}\label{sec:2}
Many results in this paper will use a generalization of the notion of projective system into the rank metric setting. So before we proceed to the rank metric codes, it is natural to recall the geometric approach for linear codes by Tsfasman and Vladut in \cite{TV91}. First let us recall the definitions of generalized weights as it was introduced by Wei in \citep{Wei91}.

Let $\Fq$ be a finite field with $q$ elements. Suppose $\C$ is an $[n,k]$-linear code over $\Fq$. For a non-zero $\D<\C$, we define the support $\S(D)$ of $\D$ as
\[
\S(\D) = \left\lbrace i\colon \exists (x_1,\cdots,x_n)\in \D, x_i\neq 0 \right\rbrace.
\]
As we mentioned in the introduction, the notion of generalized Hamming weights were introduced by Wei in \cite{Wei91} and they have some applications in cryptography, with the use of codes in wire-tap channels of type II \cite{OW85}.

\begin{defn}[Generalized Hamming weights]\label{defn:1}
For an $[n,k]$-linear code $\C$ over $\Fq$ and any integer $r$ with $1\leq r\leq k$, the $r$-th generalized weight of $\C$, denoted by $d_r(\C)$, is 
\[
d_r(\C) = \min \left\lbrace |\S(\D)|\colon \D<\C , \dim_{\Fq} \D = r \right\rbrace.
\]
\end{defn}

It is easily seen that the minimum distance of a linear code $\C$ is given by $d_1(\C)$. Namely, a non-zero codeword defines a subcode of dimension $1$ and the size of the support is just the number of non-zero elements in the codeword.

An alternative description of the generalized Hamming weight was given in Theorem 2 of \cite{Wei91} by the next proposition. For us to see the application of the generalized rank weights, we will need a similar definition. For now, let us see it for the Hamming metric.

\begin{pro}\label{pro:1}
Let $\H$ be a parity check matrix of an $[n,k]$-linear code $\C$ over a field $\Fq$. For a subset $I\subseteq \lbrace 1,\cdots, n\rbrace$, let $\H_I$ be the submatrix of $\H$ corresponding to $I$. Then, for $1\leq r\leq k$,
\[
d_r(\C) = \min\left\lbrace |I|\colon |I| - \rank \H_I \geq r\right\rbrace.
\]
\end{pro}

Another definition of the generalized Hamming weight was also given by Tsfasman and Vladut in \cite{TV95} using a geometric approach. Before that we need to translate the notion of linear code into some geometric terms.
Furthermore, when we talk about linear code, we will always talk about {\em non-degenerate} linear codes i.e. no columns of any fixed generator matrix of the code is the zero column.

\begin{defn}[Projective system]\label{defn:2}
A projective system over $\Fq$ with parameters $[n,k,d]$ is a set $X$ of $n$ points (not necessarily distinct) in a $(k-1)$-dimensional projective space $\PP = \PP^{k-1}(\Fq)$ such that $X$ is not contained in any hyperplane in $\PP$ and
\[
n-d = n-d(X) := \max\left\lbrace |X\cap H|\colon H \text{ a hyperplane in }\PP \right\rbrace,
\]
where the intersection is counted with multiplicity.
\end{defn}

It was shown in \cite{TV91} that, up to equivalence, $[n,k,d]$-projective systems  are in one to one correspondence with non-degenerate $[n,k,d]$-linear codes. For the definition of the equivalence used in this correspondence, one can have a look at \cite{TV91}.

The definition of the minimum distance $d$ can naturally be generalized to the generalized weights of a projective system:
\[
n-d_r(X) := \max\left\lbrace  |X\cap \Pi|\colon \Pi \text{ a projective subspace of codimension } r \text{ in } \PP \right\rbrace.
\]
Obviously, we have $d_1(X) = d(X)$. As it was shown in \cite{TV95}, we have the following proposition.
\begin{pro}
Let $\G$ be the generator matrix of a linear code $\C$ and let $X = \{\G\}$ (i.e. the set of vectors formed by the columns of $\G$) be its corresponding projective system. Then $d_r(\C) = d_r(X)$.
\end{pro}

Now the goal of the next section is to generalize these notions in the rank metric setting.

\section{Rank metric codes}\label{sec:3}
In this section, we give the analogy to the geometric approach of Tsfasman and Vladut. Rank metric codes were independently introduced by Gabidulin \cite{Gab85} and Delsarte \cite{Del78}. This class of codes are very interesting as they have found applications in cryptography and network coding. Before we define the analogue of projective system let us briefly recall some properties of rank metric codes.

For a vector $\x = (x_1,\ldots,x_n) \in \Fqm^n$ the rank weight, $\rank \x$, of $\x$ is the dimension of the $\Fq$-subspace of $\Fqm$ generated by $\{ x_1,\ldots, x_n\}$.
For two vectors $\x = (x_1,\ldots,x_n)$ and $\y=(y_1,\ldots,y_n)$ in $\Fqm^n$, the distance between $\x$ and $\y$ is
\[
\d(\x,\y) = \rank(\x-\y).
\]

An $[n,k,d]$ $\Fqm$-linear rank metric code $\C$ over the extension $\Fqm/\Fq$ is an $\Fqm$-subspace $\C<\Fqm^n$ of dimension $k$ such that the minimum of the rank distance between two distinct codewords is $d$. The minimum rank distance of a rank metric code $\C$ will be denoted by $\dr(\C)$. If $\G$ is a generator matrix of a linear rank metric code $\C$, then, as in \cite{JP17}, we say that $\C$ is {\em non-degenerate} if the columns of $\G$ are linearly independent over the field $\Fq$.

For an $[n,k,d]$ $\Fqm$-linear rank metric code $\C$ over the extension $\Fqm/\Fq$, the Singleton bound states that $d\leq n-k+1$ \cite{Gab85,Del78}. If such bound is attained, i.e. $d= n-k+1$, then we say that the code is a {\em maximum rank distance} (MRD) code. MRD codes are interesting because if we fix $\Fqm,n,d$, then they are the codes with the largest possible size. That property allows cryptographers to use smaller key sizes when they use MRD codes in public key cryptosystems based on linear codes \cite{GPT91}. Many constructions of MRD codes exists. See for example \cite{Gab85,Del78,She16,She19}.

\begin{rem}
We defined rank metric codes as subspaces of $\Fqm^n$, however they can also be defined by subspaces of linearized polynomials as in \cite{She16,She19}. Furthermore, rank metric codes can also be considered to be linear over $\Fq$ only, in this case codewords can also be represented by matrices in $\Fq^{m\times n}$. If $\C\subset \Fq^{m\times n}$ is linear over $\Fq$ only and its minimum distance is $d$, then the Singleton bound is given by $\dim_{\Fq} \C\leq \max\{m,n\}(\min\{m,n\}-d+1)$. As usual, codes satisfying the equality is called maximum rank distance codes.
\end{rem}
\begin{rem}
The reader should pay attention to the fact that the vector space $\Fqm^n$ can be equipped both with the rank metric and the Hamming metric. In fact, depending on the situation, we use both metrics on the same codeword.
\end{rem}

From now on, we will only consider \textbf{non-degenerate} linear rank metric codes. 
We have the following equivalent definition of the rank weight of a codeword.

\begin{pro}
Let $\C$ be a non-degenerate linear rank metric code over the extension $\Fqm/\Fq$.
The rank weight of a vector $\x\in \C$ is equal to the minimum of the Hamming weight of $\x\M$, where $\M$ runs through $\GLn(\Fq)$. 
\end{pro}
\begin{proof}
Suppose that $\x$ is a codeword with minimum rank weights $l$ and, up to permutation, we may assume that $\x = (x_1,\dots,x_n)$, where $\rank \x = \rank (x_1,\dots,x_l)$. Then we can find an invertible matrix $\M$ such that $\x\M = (x_1,\dots,x_l,0,\dots,0)$.
\qed
\end{proof}

It is this fact that helps us to generalize all notions from Hamming metric codes to rank metric codes. For instance we can define the minimum distance of a rank metric code as follows.

For a linear code $\C$ of length $n$ and a matrix $\M\in \Fq^{n\times n}$, $\C\M$ denotes the linear code such that all codewords are products $\x\M$ for some $\x\in \C$. 

\begin{thm}\label{thm:1}
The minimum distance $\dr(\C)$ of a rank metric code $\C$ is equal to
\[
\dr(\C) = \min_{\M\in \GLn(\Fq)} d(\C\M),
\]
where on the right hand side we have the Hamming distance.
\end{thm}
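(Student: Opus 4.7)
The plan is to derive Theorem~\ref{thm:1} as an immediate consequence of the previous proposition by swapping two minima. First, I would observe that the proposition identifies, for any codeword $\x \in \C$, the rank weight as
\[
\rank \x = \min_{\M \in \GLn(\F_q)} w_H(\x\M),
\]
where $w_H$ denotes Hamming weight. Taking the minimum over nonzero $\x \in \C$ then expresses $\dr(\C)$ as a nested minimum over $(\x, \M)$.

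Next, I would use that multiplication by a fixed invertible $\M \in \GLn(\F_q)$ is a bijection of $\C$ onto $\C\M$. In particular, as $\x$ ranges over the nonzero elements of $\C$, the product $\x\M$ ranges over the nonzero elements of $\C\M$, so
\[
\min_{\x \in \C,\, \x \neq 0} w_H(\x\M) = d(\C\M),
\]
the minimum Hamming distance of the linear code $\C\M$.

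The final step is simply to interchange the two minima:
\[
\dr(\C) = \min_{\x \neq 0} \min_{\M} w_H(\x\M) = \min_{\M} \min_{\x \neq 0} w_H(\x\M) = \min_{\M \in \GLn(\F_q)} d(\C\M).
\]
There is no real obstacle here; the only thing worth stating carefully is the bijection $\x \mapsto \x\M$ between the nonzero elements of $\C$ and those of $\C\M$, which is what lets me identify the inner minimum with the Hamming distance of $\C\M$. The result therefore follows directly from the preceding proposition together with the definition of Hamming distance for a linear code.
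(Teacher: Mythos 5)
Your argument is correct and follows essentially the same route the paper intends: Theorem~\ref{thm:1} is stated there without proof, as an immediate consequence of the preceding proposition, which is exactly what you make explicit by interchanging the two minima and using the bijection $\x \mapsto \x\M$ between the nonzero codewords of $\C$ and of $\C\M$. Nothing is missing; you have simply written out the details the paper leaves implicit.
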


We call two $[n,k,d]$-linear rank metric codes $\C_1$ and $\C_2$ equivalent if there exists $\M\in \GLn(\Fq)$ and $a\in \Fqm^*$ such that $\C_2 = a\C_1\M$. This definition follows the notion of linear rank metric equivalence in \cite[Proposition 1]{Mor14}.

Using Definition \ref{defn:2} and Theorem \ref{thm:1}, we can define the minimum rank distance of a code $\C$ generated by a generator matrix $\G^\C$ as
\[
n - \dr(\C) = \max \left\lbrace |\{\G^\C M\}\cap H|\colon \M\in \GLn(\Fq),\;H \text{ a hyperplane in }\PP \right\rbrace.
\]
The vectors in $\{\G^\C M\}$ are now considered to be the corresponding class in $\PP = \PP^{k-1}(\Fqm)$. We can do this because, since we only consider non-degenerate rank metric code, then $\{\G^\C M\}$ does not contain the zero vector. 

Now, in the above equation, if $H$ is a hyperplane such that $n-\dr(\C) = |\{\G^\C \M\}\cap H|=l$, then $l$ columns of $\G^\C\M$ are in $H$.
Thus, working in the projective setting, the $\Fq$-projective subspace generated by these columns are in $H$ since $H$ is a hyperplane in $\PP$ and can also be considered as an $\Fq$-projective subspace of $\PP$.
Since $\M$ runs through all the possible invertible matrices over $\Fq$, then this leads us to think of the minimum number of $\Fq$-linearly independent elements of $\{\G^\C\}_{\Fq}\cap H$ instead of $|\{\G^\C \M\}\cap H|$, for all $\M\in \GLn(\Fq)$. In other words, we may think of the notion of dimension of some vector space over $\Fq$.
Recall that $\{\G^\C\}_{\Fq}$ denotes the $\Fq$-vector space generated by the columns of $\G^\C$.

Now, we are ready to formalize this generalization with the notion of projective system. However, for simplicity, there is no need for us to work in the projective space. We will work in the affine space. We will come back to the projective setting in a later section with the notion of linear sets.

\begin{defn}[$q$-Systems]\label{defn:4}
Let $\Fqm/\Fq$ be an extension of finite degree $m$. An $[n,k,d]$ $q$-system over $\Fqm$ is an $n$-dimensional $\Fq$-subspace $X$ of the $k$-dimensional affine space $\AA = \Fqm^k$ such that $X$ is not contained in any hyperplane in $\AA$ and
\[
n-d := \max\left\lbrace \dim_{\Fq} X\cap H\colon H \text{ a hyperplane in }\AA \right\rbrace.
\]
$d$ is called the minimum distance of $X$ and it is usually denoted by $\dr(X)$.
\end{defn}

\begin{rem}\label{rem:1}
Note that in the above definition, the affine space $\AA = \Fqm^k$ is defined over $\Fqm$ whereas $X$ is considered to be only an $\Fq$-subspace.
\end{rem}

We can see that an $[n,k,d]$ $q$-system over $\Fqm$ can be defined as a set $X$ of $n$ points in a $(k-1)$-dimensional projective space $\PP = \PP^{k-1}(\Fqm)$ over $\Fqm$ such that for any $\M\in \GLn(\Fq)$, $\{[X]\M\}$ is a projective system over $\Fqm$. 

Two $q$-systems $X_1$ and $X_2$ are called equivalent if there is a vector space automorphism $\phi$ of $\AA$ such that $\phi(X_1)=X_2$.

Similarly to linear Hamming metric codes, $[n,k,d]$ $q$-systems are in one to one correspondence with non-degenerate $[n,k,d]$-linear rank metric codes.

Namely we have the following proposition.
\begin{thm}\label{thm:2}
Let $\Fqm/\Fq$ be an extension of degree $m$. The equivalence classes of $[n,k,d]$ $q$-systems are in one to one correspondence with the equivalence classes of non-degenerate $[n,k,d]$-linear rank metric codes via the correspondence
\[
X  \leftrightarrow \C = \<[X]_{\Fq}\>_{\Fqm},
\]
or equivalently
\[
X = \{\G\}_{\Fq}\leftrightarrow \C = \<\G\>_{\Fqm}.
\]
\end{thm}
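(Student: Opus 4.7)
The plan is to check that the two assignments are well-defined on equivalence classes, preserve the parameters $[n,k,d]$, and are mutually inverse. First I would verify the forward map: starting from an $[n,k,d]$ $q$-system $X$ with a chosen $\Fq$-basis $P_1,\ldots,P_n$, form $\G = [X]_{\Fq}$; since $X$ lies in no $\Fqm$-hyperplane, the columns $\Fqm$-span $\Fqm^k$, so $\G$ has rank $k$ over $\Fqm$ and $\C = \<\G\>_{\Fqm}$ is $k$-dimensional. Non-degeneracy follows from the $\Fq$-independence of the $P_i$. Changing the basis replaces $\G$ by $\G M$ with $M \in \GLn(\Fq) \subset GL_n(\Fqm)$, which preserves the $\Fqm$-row span, so $\C$ depends only on $X$. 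The reverse map is analogous: given $\C$ with generator matrix $\G$, non-degeneracy yields $\dim_{\Fq}\{\G\}_{\Fq} = n$ and the fact that $\G$ has rank $k$ over $\Fqm$ ensures the resulting $q$-system lies in no hyperplane. The two maps are visibly inverse on representatives.

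Next I would establish the distance identity $\dr(\C) = \dr(X)$. By Theorem \ref{thm:1} and the discussion preceding Definition \ref{defn:4},
\[
n - \dr(\C) = \max\left\{ |\{\G M\} \cap H| : M \in \GLn(\Fq),\ H \text{ a hyperplane in } \PP \right\}.
\]
The key lemma is that for each fixed hyperplane $H$, $\max_M |\{\G M\} \cap H| = \dim_{\Fq}(X \cap H)$. The inequality $\leq$ is immediate, since the columns of $\G M$ form an $\Fq$-basis of $X$, so those lying in $H$ are linearly independent vectors of $X \cap H$. For the converse, I would extend an $\Fq$-basis of $X \cap H$ to a basis of $X$ and take $M$ to be the change-of-basis matrix from $\{P_i\}$ to this extended basis, so the first $\dim_{\Fq}(X \cap H)$ columns of $\G M$ land in $H$. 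Maximizing over $H$ then gives the distance identity.

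For the equivalences, a code equivalence $\C_2 = a \C_1 M$ with $a\in \Fqm^*$, $M\in\GLn(\Fq)$ lifts to generator matrices $\G_2 = a \G_1 M$, and then $\{a \G_1 M\}_{\Fq} = a \{\G_1\}_{\Fq} = a X_1$, exhibiting $X_2$ as the image of $X_1$ under the scalar automorphism of $\AA$. Combined with the freedom in choosing a generator matrix ($\G_1 \mapsto A \G_1$ for $A \in GL_k(\Fqm)$), this produces an arbitrary $\phi \in GL_k(\Fqm)$ acting on $X_1$. Conversely, if $X_2 = \phi(X_1)$ with $\phi \in GL_k(\Fqm)$ of matrix $A$, the columns of $A \G_1$ $\Fq$-span $X_2$, so expressing a chosen basis of $X_2$ in them via some $M' \in \GLn(\Fq)$ gives $\G_2 = A \G_1 M'$, whose $\Fqm$-row span equals $\C_1 \cdot M'$, yielding a code equivalence.

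The main obstacle is the intermediate identity $\max_M |\{\G M\} \cap H| = \dim_{\Fq}(X \cap H)$, which is what bridges the Hamming-type characterization of rank distance from Theorem \ref{thm:1} with the intrinsic $\Fq$-dimension inside the $q$-system. Once this identification is in place, everything else reduces to bookkeeping about bases, generator matrices, and the matching between the two notions of equivalence.
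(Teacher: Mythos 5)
Your proposal is correct and follows essentially the same route as the paper's (much terser) proof: check the parameters $n$, $k$, non-degeneracy, match the two notions of minimum distance via Theorem \ref{thm:1}, and verify compatibility with the two equivalence relations; your key lemma $\max_{\M}|\{\G \M\}\cap H| = \dim_{\Fq}(X\cap H)$ is exactly the step the paper dismisses with ``the equality of the minimum distance comes from the definition,'' and your basis-extension argument for it is right. One small correction: a change of $\Fq$-basis of $X$ replaces $\G$ by $\G\M$ whose $\Fqm$-row span is $\C\M$, not $\C$, so the code is well defined only up to equivalence (which is all the theorem needs), not on the nose as your phrase ``preserves the $\Fqm$-row span'' suggests.
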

\begin{proof}
It is easy to see that equivalent $q$-systems give equivalent linear rank metric codes. Let us check the parameters.

Let $X = \< P_1,\ldots,P_n \>_{\Fq}$ be an $[n,k,d]$ $q$-system. Let $\C$ be the linear code such that the columns of the generator matrix $\G$ are the $P_i$'s. It is obvious that the length of the code is $n$. For any $\x \in \AA$, $x P_i^T\neq 0$ for some $i$ with $1\leq i\leq n$. Otherwise such $x$ defines a hyperplane which contains all the $P_i$'s. Therefore the rows of $\G$ are linearly independent over $\Fqm$, thus the dimension of the code is $k$. The equality of the minimum distance comes from the definition of $d$ for the $q$-system and from Theorem \ref{thm:1} for the rank metric codes. Since the $P_i$'s are linearly independent over $\Fq$, then $\C$ is non-degenerate. 

One can easily check that this map is surjective by taking $X$ as the vector space generated over $\Fq$ by the columns of the generator matrix $\G$ of a rank metric code.
\qed
\end{proof}

We are now ready to define the generalized weights of a $q$-system.

\begin{defn}[Generalized rank weight]\label{defn:5}
Let $X$ be a $q$-system over $\Fqm$. The generalized weights of a $q$-system is given by
\begin{align*}
n-\dr_r(X) := \max\left\lbrace \right.& \dim_{\Fq} X \cap \Pi: \\
& \left. \Pi \text{ an } \Fqm\text{-subspace of codimension } r \text{ in } \AA\right\rbrace.
\end{align*}
\end{defn}

We easily see that the minimum distance $\dr(X)$ of a $q$-system $X$ is given by $\dr_1(X)$.

\begin{defn}\label{defn:6}
Let $\C$ be an $[n,k,d]$-linear rank metric code with generator matrix $\G$. We define the generalized rank weights of $\C$ as the generalized rank weights of the $q$-system $X$ generated over $\Fq$ by the columns of $X$, i.e., $X = \{\G\}_{\Fq}$.
\end{defn}

In fact, this definition of generalized weights of $\C$ has an analogous version from Definition \ref{defn:1}. First let us define the notion of $q$-support for rank metric code.
\begin{defn}[$q$-Support]\label{defn:7}
Let $Y$ be a vector space with generator matrix $\G^Y$ over $\Fqm$. The $q$-support $\qS(\G^Y)$ of the matrix $\G^Y$ is the $\Fq$-vector space generated by the columns of $\G^Y$. A $q$-support $\qS(Y)$ of $Y$ with respect to $\G^Y$ is $\qS(Y):= \qS(\G^Y)$. 
\end{defn}

\begin{rem}\label{rem:2}
In Definition \ref{defn:7}, since there are multiple choices for the generator matrix $\G^Y$, then there are also multiple choices for the $q$-support of $Y$. However, it is not difficult to show that $\dim_{\Fq} \qS(\G^Y)$ does not depend on the choice of the generator matrix. Therefore $\dim_{\Fq} \qS(Y)$ is uniquely defined and this also does not affect the notion of generalized weight as we define in the following theorems.
\end{rem}

The first theorem is the analogue of Definition \ref{defn:1} whereas the second theorem is the analogue of Proposition \ref{pro:1}.

\begin{thm}\label{thm:3}
Let $\C$ be an $[n,k,d]$-linear rank metric code with generator matrix $\G$, then the generalized rank weights of $\C$ are equal to
\[
\dr_r(\C) = \min \left\lbrace \dim_{\Fq}\qS(\D):\quad \D<\C , \dim_{\Fqm} \D = r \right\rbrace.
\]
\end{thm}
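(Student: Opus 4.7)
The plan is to set up a bijection between $r$-dimensional $\Fqm$-subcodes $\D < \C$ and codimension-$r$ $\Fqm$-subspaces $\Pi$ of $\AA = \Fqm^k$ under which the $q$-support dimension of $\D$ is related to the intersection dimension $\dim_{\Fq}(X \cap \Pi)$ by a clean complementary formula. The relation $\dr_r(\C) := \dr_r(X)$ from Definitions \ref{defn:5}--\ref{defn:6} then converts the $\max$ defining $\dr_r(X)$ into the desired $\min$.

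First I would parametrize subcodes. Since $\G$ is $k \times n$ of $\Fqm$-rank $k$, any $\D < \C$ with $\dim_{\Fqm} \D = r$ has a generator matrix of the form $\G^\D = \M \G$ for some $\M \in \Fqm^{r \times k}$ of rank $r$, and two choices of $\M$ differ by left multiplication by an element of $GL_r(\Fqm)$. Such an $\M$ determines a surjective $\Fqm$-linear map $\phi_\M \colon \AA \to \Fqm^r$ whose kernel $\Pi_\D := \ker \phi_\M$ is an $\Fqm$-subspace of $\AA$ of codimension $r$ depending only on $\D$ (not on the chosen $\M$). Conversely, every codimension-$r$ $\Fqm$-subspace of $\AA$ is the kernel of some rank-$r$ map $\M$, which then produces a subcode $\D = \<\M\G\>_{\Fqm}$ of dimension $r$ (the rows of $\M\G$ are $\Fqm$-independent since $\G$ has $\Fqm$-rank $k$ and $\M$ has rank $r$). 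Hence $\D \leftrightarrow \Pi_\D$ is a bijection.

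Next I would compute the $q$-support dimension through $\phi_\M$. Writing $P_1, \ldots, P_n$ for the columns of $\G$ (an $\Fq$-basis of $X$), the columns of $\G^\D = \M\G$ are $\phi_\M(P_1), \ldots, \phi_\M(P_n)$, so by Definition \ref{defn:7}
\[
\qS(\D) = \<\phi_\M(P_1), \ldots, \phi_\M(P_n)\>_{\Fq} = \phi_\M(X).
\]
Applying the rank--nullity theorem to the $\Fq$-linear restriction $\phi_\M|_X \colon X \to \Fqm^r$, whose kernel is $X \cap \Pi_\D$, yields the key identity
\[
\dim_{\Fq} \qS(\D) = \dim_{\Fq} X - \dim_{\Fq}(X \cap \Pi_\D) = n - \dim_{\Fq}(X \cap \Pi_\D).
\]
As a bonus this re-derives Remark \ref{rem:2}: the right-hand side depends only on $\D$.

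Finally, taking the minimum over $\D$ on the left and applying the bijection $\D \leftrightarrow \Pi_\D$ on the right converts the minimum into $n$ minus the maximum of $\dim_{\Fq}(X \cap \Pi)$ over codimension-$r$ $\Fqm$-subspaces $\Pi$ of $\AA$, which is precisely $\dr_r(X) = \dr_r(\C)$ by Definitions \ref{defn:5}--\ref{defn:6}. The only mildly subtle step is the bijection $\D \leftrightarrow \Pi_\D$; everything else is rank--nullity and bookkeeping. I do not anticipate a genuine obstacle, since non-degeneracy of $\C$ ensures that $\G$ has the correct $\Fqm$-rank and $\Fq$-independent columns, so all the maps involved behave as required.
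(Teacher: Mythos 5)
Your proof is correct and follows essentially the same route as the paper's: both pass between $r$-dimensional subcodes $\D=\<\M\G\>_{\Fqm}$ and codimension-$r$ subspaces of $\AA$ (your $\ker\phi_\M$ is exactly the paper's $\Pi^\perp$) and identify $\qS(\D)$ with the image of $X$ under the induced $\Fq$-linear map. Your packaging as a single bijection plus one rank--nullity identity is a cleaner organization than the paper's two separate inequality directions with the decomposition $X=X_1\oplus_{\Fq}X_2$, but the underlying mechanism is identical.
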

\begin{proof}
Suppose that the generalized rank weight in Definition \ref{defn:5} is equal to $d$ and the generalized rank weight in Theorem \ref{thm:3} is equal to $d'$. Our goal is to show that $d=d'$. We assume that $X$ is the $q$-system generated by the columns of $\G$.

From Definition \ref{defn:5}, suppose that $X_1=X \cap \Pi$, $X=X_1\oplus_{\Fq} X_2$ with $n-d = \dim_{\Fq} X_1$ and $d=\dim_{\Fq} X_2$. Assume that $\Pi^\perp$ is the orthogonal complement of $\Pi$ in $\AA$ with generator matrix $\G^{\Pi^\perp}$, and therefore it has dimension $r$. $\D = \G^{\Pi^\perp}\C$ is a subcode of $\C$ of dimension $r$. Then we have a $q$-support of $\D$ given by $\qS(\D) = \qS\left( \G^{\Pi^\perp}X_1 \oplus_{\Fq} \G^{\Pi^\perp} X_2 \right)$. Therefore $\qS(\D) = \qS\left( \G^{\Pi^\perp}X_2 \right)$, since $\G^{\Pi^\perp} X_1= \{\0\}$. Since $X_2$ does not contain any element of $\Pi$, then we have $\dim_{\Fq} \qS(\D) = \dim_{\Fq} \qS(X_2) = d$. By definition of $d'$, we must have $d'\leq d$.

Conversely, suppose that $d' = \dim_{\Fq} \qS(\D)$ such that $\D<\C$ of dimension $r$. We can write $\D = \G^{\Pi}\C$ and define $\Pi$ to be the $\Fqm$-subspace of dimension $r$ in $\AA$ with generator matrix is $\G^{\Pi}$. By the definition of $q$-support, for the $\Fq$-subspace $\G^{\Pi} X<X$, $\dim_{\Fq} \G^{\Pi} X = d'$. Suppose that $\Pi^\perp$ is the orthogonal complement of $\Pi$ in $\AA$. Then, $\Pi^\perp$ is of codimension $r$. We claim that  $\dim_{\Fq}\Pi^\perp \cap X = n-d'$ so that $n-d'\leq n-d$  i.e. $d\leq d'$, which will conclude the proof. But by hypothesis $d' = \dim_{\Fq} \G^{\Pi} X$ and $X$ is of dimension $n$, therefore there is $X_1<_{\Fq} X$ such that $\G^{\Pi} X_1 = \<\0\>$ and $\dim_{\Fq} X_1 = n-d'$. But obviously, $X_1 = \Pi^\perp \cap X$.
\qed
\end{proof}

\begin{thm}\label{thm:4}
Let $\C$ be an $[n,k,d]$-linear rank metric code with parity check matrix $\H$, then the generalized rank weights of $\C$ are equal to
\begin{align*}
\dr_r(\C) = \min \left\lbrace \right. i:\quad  & 1\leq i\leq n,\\
& \left. \exists \M\in \Fq^{n\times i},\; \rank \M = i,\; i - r \geq \rank \H\M \right\rbrace.
\end{align*}
\end{thm}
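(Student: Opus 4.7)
I will deduce Theorem \ref{thm:4} from the characterization in Theorem \ref{thm:3}, establishing by two inequalities the equality of $\dr_r(\C)$ with $B:=\min\{i: 1\le i\le n,\ \exists\,\M\in\Fq^{n\times i},\ \rank \M = i,\ \rank \H\M \le i-r\}$. Throughout I identify an element of $\C$ with its transpose as a column vector in $\Fqm^n$, so that $\C$ is the right kernel of the $\Fqm$-linear map $\H:\Fqm^n\to\Fqm^{n-k}$.

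For the inequality $\dr_r(\C)\le B$, take $\M\in\Fq^{n\times i}$ attaining the minimum $B$. Since $\ker\H = \C$, the kernel of $\H\M:\Fqm^i\to\Fqm^{n-k}$ equals $\M^{-1}(\C)$, with $\Fqm$-dimension $i - \rank\H\M \ge r$. Injectivity of $\M$ lifts this to $\dim_{\Fqm}(\{\M\}_{\Fqm}\cap\C)\ge r$, so I can pick an $r$-dimensional $\Fqm$-subspace $\D\subset\{\M\}_{\Fqm}\cap\C$. Writing each basis vector of $\D$ as $\M w_j$ with $w_j\in\Fqm^i$ and assembling the $w_j$ into $W\in\Fqm^{i\times r}$ yields $\G^\D = W^T\M^T$. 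Every column of $\G^\D$ is then an $\Fq$-linear combination of the $i$ columns of $W^T$, so $\dim_{\Fq}\qS(\D)\le i$. Theorem \ref{thm:3} then delivers $\dr_r(\C)\le i = B$.

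For the reverse inequality $B \le \dr_r(\C)$, take $\D<\C$ of dimension $r$ attaining the minimum in Theorem \ref{thm:3}, and set $s := \dim_{\Fq}\qS(\D) = \dr_r(\C)$. Fix a matrix $V\in\Fqm^{r\times s}$ whose columns form an $\Fq$-basis of $\qS(\D)$; expanding the columns of $\G^\D$ in this basis gives a factorization $\G^\D = V N$ with $N\in\Fq^{s\times n}$, and the fact that the $\Fq$-column span of $V N$ equals $\qS(\D)$ (of dimension $s$) forces $\rank N = s$. The condition $\D<\C$ translates to $\H N^T V^T = 0$. The crucial observation is that $V$ has $\Fqm$-rank $r$: its columns $\Fq$-generate $\qS(\D)$, which $\Fqm$-generates $\Fqm^r$ because the rows of $\G^\D$ are $\Fqm$-linearly independent. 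Hence the column space of $V^T$ is an $r$-dimensional subspace of $\Fqm^s$ annihilated by $\H N^T$, giving $\rank \H N^T \le s - r$. Taking $\M := N^T$ and $i := s$ supplies a witness for $B$, so $B\le s = \dr_r(\C)$.

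The main technical subtlety will be the rank count in the reverse direction: $V$ has $s$ columns but only $\Fqm$-rank $r$, and this gap is exactly what produces the slack $s-r$ on the right-hand side of the inequality $\rank \H\M \le i-r$ in Theorem \ref{thm:4}. The forward direction, by contrast, is a straightforward kernel/preimage computation once one recognizes that $\H$ on $\Fqm^n$ has right kernel $\C$.
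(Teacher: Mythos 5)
Your proof is correct and follows essentially the same route as the paper: both directions are deduced from Theorem \ref{thm:3}, the forward one by building an $r$-dimensional subcode from the kernel of $\H\M$, the reverse one by extracting from $\qS(\D)$ a full-rank matrix $N\in\Fq^{s\times n}$ with $\G^\D = VN$ and counting ranks. The paper reaches the same factorization by completing to an invertible matrix $\M$ over $\Fq$ and slicing $\M^{-1}$; your direct basis-expansion of the columns of $\G^\D$ is a cleaner packaging of the identical argument.
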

\begin{proof}
Suppose that the generalized rank weight in Theorem \ref{thm:3} is equal to $d$ and the generalized rank weight in Theorem \ref{thm:4} is equal to $d'$. Our goal is again to show that $d=d'$. 

Suppose that $\M\in \Fq^{n\times d'}$ and $d'=\H\M$ such that $d'-r\geq \rank \H\M$.
We may assume that $\M\in \Fq^{n\times d'}$ such that $\rank \M = d'$ and $d'-r =\rank \H\M$. Indeed, if $\M\in \Fq^{n\times d'}$ such that $\rank \M = d'$ and $d'-r >\rank \H\M$, then we may remove a column of $\M$ to get a matrix $\A$ of rank $d'-1$ in $\Fq^{n\times (d'-1)}$ and $d'-1-r \geq\rank \H\M\geq \rank \H\A$. This is in contradiction with the definition of $d'$ as being the minimum.

Consider the $\Fqm$-linear map 
\begin{align*}
(\H\M)^T:\Fqm^{d'}&\rightarrow \Fqm^{n-k}\\
 (x_1,\dots,x_{d'})&\mapsto (x_1,\dots,x_{d'})(\H\M)^T.
\end{align*} 
Let $U$ be the kernel of the above map. Since, $\rank \H\M = d'-r$, by the rank nullity theorem, $\dim_{\Fqm} U = r$. Now, let $\D$ be the subspace of $\Fqm^n$ defined by 
\[
\D = \left\lbrace \x\in\Fqm^n, x_{d'+1} = \cdots = x_n = 0 \text{ and } (x_1,\cdots,x_{d'})\in U \right\rbrace.
\]
 Thus, $\dim_{\Fq}\qS(\D)\leq d'$ and $\D$ has dimension $r$. 
 Furthermore $\D[\M|\N]^T$ has dimension $r$, where $\N$ is some matrix to concatenate with $\M$ so that $[\M|\N]$ is invertible. 
 $\D[\M|\N]^T$ is also a subcode of the code $\C$ since $\D[\M|\N]^T \H^T = U\M^T \H^T = U(\H\M)^T = \{\0\}$. By the definition of $d$ in Theorem \ref{thm:3}, we have $d\leq \dim_{\Fq}\qS(\D)\leq d'$. 

Conversely, let $\D$ be a subcode of dimension $r$ of $\C$ with $\dim_{\Fq}\qS(\D)=d$. Thus there is an invertible matrix $\M$ such that $\S(\D \M) = \{1,\cdots,d\}$ (notice that here we have the classical support in the Hamming metric setting). Let $\M_s$ be the matrix consisting of all the first $d$ columns of $\M$. $\M_s$ is of rank $d$. By the definition of the support, $\D\M_s$ has dimension $r$. If $\G^\D$ is the generator matrix of the subcode $\D$, then $\G^\D \M \M^{-1} \H^T = \0$, where $\H$ is the parity check matrix of the code. Therefore, if we write $\M=[\M_s|\N]$, then 
\[
\G^\D [\M_s|\N] \M^{-1} \H^T = \0, \text{where } \M = [\M_s|\N].
\]
However, $\G^\D\N=\0$ so that
\[
[\G^\D\M_s|\0] \M^{-1} \H^T = \0.
\]
Set $\M'$ to be the matrix consisting of the first $d$ rows of $\M^{-1}$. Therefore $\G^\D\M_s \M' \H^T = \0$. Since $\G^\D\M_s$ is of rank $r$, then its kernel (as a linear map $\Fqm^d\rightarrow \Fqm^r$) has dimension $d-r$. Therefore $\rank \M'\H^T\leq d-r$, since the column space of $\M'\H^T$ is in the previous kernel. Hence $d-r \geq \rank \H(\M')^T$, and $\rank (\M')^T = d$. Thus, by the definition of $d'$, $d'\leq d$. This concludes the proof.
\qed
\end{proof}

We can also check that $\dr_1(\C)$ corresponds to the original definition of the minimum rank distance of a code.

The definition using the $q$-system notion is very helpful in computing the generalized rank weights of some linear codes. 
For instance, we will see in a later section that the generalized weights of a constant rank weight code can be easily computed. 
Theorem \ref{thm:3} is especially useful to obtain the definition of generalized rank weights with Theorem \ref{thm:4}. 
Theorem \ref{thm:4} in turn is needed to see why our notion of generalized rank weight characterizes the performance of codes when used in wiretap network codes as we will see in a later section. There are several approaches for the notion of generalized weights for rank metric codes \cite{JP17,KMU15,OS12,Duc15,Rav16}. These existing definitions were shown to be equivalent in \cite{JP17}.

For the remaining part of this paper, we will switch between these three definitions of generalized rank weights depending on the situation. We will use both the notions of $q$-systems and rank metric codes interchangeably, depending on which notion we find easy to write down a proof. 

A natural question to us is whether our definition of generalized rank weights is equivalent to the other known definitions in \cite{JP17,KMU15,OS12,Duc15,Rav16}. Indeed we show that our definition is equivalent to the definition in \cite{JP17}.

We fix a basis $\{b_1,\cdots,b_n\}$ of $\Fqm/\Fq$ and for $x=\sum_{i=1}^n l_i b_i \in \Fqm$, let $\overline{x} = (l_1,\cdots,l_n)\in \Fq^n$. For a codeword $\x = (x_1,\cdots,x_n)$ in $\Fqm^n$, Let $\M_B(\x)$ be the matrix such that the $i$-th column of $\M_B(\x)$ is the $\overline{x_i}$. We define the matrix support of $\x$ as the rowspace of $\M_B(\x)$. For a subspace $\D$ of a linear code $\C$, the matrix support $\S_M(\D)$ of $\D$ is defined to be the $\Fq$-vector space generated by the matrix support of each element of a basis of $\D$. Then, in \citep{JP17}, we have the following definition of generalized rank weight.

\begin{defn}\label{defn:8}
Let $\C$ be a linear rank metric code over $\Fqm/\Fq$. Then the generalized rank weight is defined as
\[
\dr_r(\C) = \min \left\lbrace \dim_{\Fq}\S_M(\D):\quad \D<\C , \dim_{\Fqm} \D = r \right\rbrace
\]
\end{defn}

If we fix a basis $\{\x_1,\cdots,\x_r\}$ of $\D$, then $\dim_{\Fq}\S_M(\D)$ is also equal to the dimension of the columnspace of 
\[
\begin{bmatrix}
\M_B(\x_1) \\
\vdots \\
\M_B(\x_n)
\end{bmatrix}.
\]
But this later dimension is also equal to the dimension of the vector space generated over $\Fq$ by the columns of
\[
\begin{bmatrix}
\x_1 \\
\vdots \\
\x_n
\end{bmatrix}.
\] 
And therefore $\dim_{\Fq}\S_M(\D) = \dim_{\Fq}\qS(\D)$. Thus the definition of generalized weight in Definition \ref{defn:8} is equal to the definition of generalized weight in Theorem \ref{thm:3}.

\section{Wiretap network codes}\label{sec:4}
We briefly explain the scheme as it was shown in \cite{OS12}. Let $\C$ be a non-zero $[n,k,d]$-linear code with parity check matrix $\H$. The secret message is a vector $\s\in \Fqm^k$. The message which is sent across the network is $\x = (x_1,\cdots,x_n)\in \Fqm^n$ randomly chosen in the coset with syndrome $\s=\H\x^T$. $\Fq$-linear combinations of the $x_i$'s will be spread across the network via known encoding. We assume that the eavesdropper, Eve, can observe $u$ edges. So, we can say that Eve knows $\w = \B \x^T$, with $\B\in \Fq^{u\times n}$. We assume that $\B$ is also of full rank $u$.  We want to minimize the information Eve can know about $\s$. The information Eve knows are $\B,\w,\H$. We will not go into the details of the information theoretical properties of the scheme but rather we give a simple algebraic argument. For more details one can have a look at \citep{RS07,OS12}.

Let $\<\B\>$ and $\<\H\>$ respectively be the $\Fqm$-subspaces of $\Fqm^n$ generated by the rows of $\B$ and $\H$. Suppose that $\y\in \<\B\>\cap\<\H\>$. Thus we can write $\y= \l \B=\m\H$, for some $\l\in \Fqm^u$ and $\m\in \Fqm^{n-k}$. Multiplying by $\x^T$ , we get a relation
\[
\m\s = \l\w,
\]
where $\s$ is the syndrome defined earlier and $\w$ is known by Eve.

Thus an element of the intersection $\<\B\>\cap\<\H\>$ gives a linear relation between the entries of $\s$.
The more the size of the intersection $\<\B\>\cap\<\H\>$ is, the more the linear relations about the elements of $\s$ are and therefore the more we know about $\s$. Thus to minimize the information accessed by Eve about $\s$, we want to minimize the intersection $\<\B\>\cap\<\H\>$ for any $\B\in \Fq^{u\times n}$. So, an important parameter to look at is
\[
\delta_u = \max_{\substack{\B\in \Fq^{u\times n}\\ \rank \B=u}} \dim \<\B\>\cap\<\H\>.
\]
We want to look at the largest possible $\delta_u$ for a particular $\H$ in order to decide if $\H$ defines the best code.

For $\B\in \Fq^{u\times n}$, let $\M\in \Fq^{(n-u)\times n}$ be a generator matrix of the orthogonal complement of $\<\B\>$ as a subspace of $\Fqm^n$. Thus $\y\in \<\B\>\cap\<\H\>$ is equivalent to $\y\M^T = \0$ and $\y\in \<\H\>$. So the dimension of $\<\B\>\cap\<\H\>$, is equal to the dimension of the kernel of the map $\<\H\> \rightarrow \Fqm^{n}$ where $\y\mapsto \y\M^T$. By the rank nullity theorem, the later dimension is equal to $(n-k)-\rank \H\M^T$. Therefore our task is equivalent to finding the minimum
\[
\Delta_u = \min_{\substack{\M\in \Fq^{(n-u)\times n}\\ \rank \M=n-u}} \rank \H\M^T.
\]
For such $\Delta_u$, there is $\M^T\in \Fq^{n\times (n-u)}$ of rank $n-u$ such that $\rank \H\M^T\leq \Delta_u$. Therefore, by Theorem \ref{thm:4}, we have 
\begin{equation}\label{eq:1}
\dr_{n-u-\Delta_u}(\C)\leq n-u.
\end{equation}

Furthermore, by Theorem \ref{thm:4} (and as we saw in its proof), there exists $\M_1\in \Fq^{n\times \dr_{n-u-\Delta_u+1}(\C)}$ such that $\rank \M_1 = \dr_{n-u-\Delta_u+1}(\C)$ and $\dr_{n-u-\Delta_u+1}(\C) - n+u+\Delta_u-1 = \rank \H\M_1$. If we suppose that $n-u\geq \dr_{n-u-\Delta_u+1}(\C)$, then $\dr_{n-u-\Delta_u+1}(\C) = n-u-\e$, $\e\geq 0$. Therefore, $\M_1\in 
\Fq^{n\times (n-u-e)}$ with $\rank \M_1 = n-u-e$ and $\rank \H\M_1= \Delta_u-e-1$. 

Now choose a matrix $\N$ over $\Fq$ such that $\M_2^T = [\M_1|\N] \in \F_q^{n\times (n-u)}$ and $\rank \M_2^T = n-u$. 
Since we added $e$ columns from $\M_1$ to get $\M_2^T$, then $\rank \H\M_2^T\leq \Delta_u-1$. Hence, by definition we have $\Delta_u\leq \Delta_u - 1$ which is a contradiction.
Therefore
\begin{equation}\label{eq:2}
n-u< \dr_{n-u-\Delta_u+1}(\C).
\end{equation}
Equations \eqref{eq:1} and \eqref{eq:2} give us the following theorem.
\begin{thm}\label{thm:5}
\[
\dr_{n-u-\Delta_u}(\C)\leq n-u< \dr_{n-u-\Delta_u+1}(\C).
\]
\end{thm}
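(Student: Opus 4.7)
The plan is essentially to collect the two inequalities $\eqref{eq:1}$ and $\eqref{eq:2}$ already derived in the paragraph preceding the statement, since the theorem is exactly their conjunction. So the proof amounts to pointing out that both bounds come from applying Theorem~\ref{thm:4} to well-chosen matrices.

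For the upper bound $\dr_{n-u-\Delta_u}(\C)\leq n-u$: by definition of $\Delta_u$, there exists $\M\in \Fq^{n\times (n-u)}$ of rank $n-u$ with $\rank \H\M = \Delta_u$. Then $(n-u) - (n-u-\Delta_u) = \Delta_u \geq \rank \H\M$, so setting $i = n-u$ and $r = n-u-\Delta_u$ in Theorem~\ref{thm:4} gives $\dr_{n-u-\Delta_u}(\C)\leq n-u$, which is $\eqref{eq:1}$.

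For the strict lower bound $n-u < \dr_{n-u-\Delta_u+1}(\C)$, I would argue by contradiction. Assume $n-u \geq \dr_{n-u-\Delta_u+1}(\C)$. By Theorem~\ref{thm:4} (using the witness guaranteed by the ``min'' as in the proof of that theorem), there is $\e\geq 0$ and a full-rank matrix $\M_1\in \Fq^{n\times (n-u-\e)}$ with $\rank \H\M_1 = \Delta_u - \e - 1$. Extend $\M_1$ by $\e$ additional columns over $\Fq$ to form a full-rank matrix $\M_2^T = [\M_1|\N] \in \Fq^{n\times (n-u)}$. Adding $\e$ columns to $\H\M_1$ can raise its rank by at most $\e$, so $\rank \H\M_2^T \leq \Delta_u - 1$, which contradicts the minimality in the definition of $\Delta_u$.

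Combining the two, we obtain $\dr_{n-u-\Delta_u}(\C)\leq n-u < \dr_{n-u-\Delta_u+1}(\C)$. There is no real obstacle: the only place that needs a tiny bit of care is the contradiction step, where one must verify that $\M_2^T$ can indeed be chosen of full rank $n-u$ (which requires $n-u \leq n$, i.e., $u\geq 0$, trivial) and that the rank bound on $\H\M_2^T$ holds — both standard linear-algebra facts. Everything else is a direct invocation of Theorem~\ref{thm:4}.
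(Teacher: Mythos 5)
Your proof is correct and follows essentially the same route as the paper: the theorem is stated there precisely as the conjunction of the two displayed inequalities \eqref{eq:1} and \eqref{eq:2}, the first obtained from the definition of $\Delta_u$ via Theorem \ref{thm:4}, and the second by the same contradiction argument you give (taking the witness matrix $\M_1$ from Theorem \ref{thm:4}, padding it with $\e$ columns to a full-rank $\M_2^T\in\Fq^{n\times(n-u)}$, and contradicting the minimality of $\Delta_u$). Nothing further is needed.
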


The above proof is largely inspired by a proof of the same statement in the context of Hamming code in \cite{Wei91}. This theorem implies that the gain of information for the eavesdropper exactly occurs at the generalized weights. This makes them as interesting parameters for a code. The use of $\delta_u$ to describe the security parameters is suggested in \cite{OS12}. However the I have not seen the statement of Theorem \ref{thm:5} as I wrote it here. A different expression of the use of generalized weights as parameters for the security of wiretap network codes can also be found in \cite{KMU15}.

Since our scheme is the same as the scheme in \cite{OS12}, this confirms the fact that our definition of generalized rank weights is equivalent to existing definitions.

In the next sections, we will have a look at the properties of the generalized rank weights.

\section{Properties of generalized rank weights}\label{sec:5}

The first important properties of generalized rank weights is the monotonicity. 
The proof uses the geometric property in analogy with \citep{TV95}. Since our definition is equivalent to existing definitions, there is not really a need to present the proofs of the following properties. In fact, the monotonicity, duality of the generalized rank weights and the generalized Singleton bound were already proved but using different definitions \cite{KMU15,Duc15}. However, we still think that it is nice to give the proof of the monotonicity and duality using our definitions. Our proofs are different and they are largely inspired by \cite{Wei91}. We adapt the method therein context of rank metric codes.

\begin{thm}[Monotonicity]\label{thm:6}
Let $\C$ be a $[n,k]$-linear rank metric code over $\Fqm/\Fq$. Let $\dr_r(\C)$ be the generalized weight of $\C$, then
\[
0<\dr_1<\cdots<\dr_k = n.
\]
\end{thm}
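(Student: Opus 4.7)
My plan is to work with the $q$-system description of Definition~\ref{defn:5}, since the geometric picture makes the induction step very clean. Let $X$ be the $q$-system associated to $\C$ via Theorem~\ref{thm:2}, so that $\C$ and $X$ share the same generalized rank weights. Recall that
\[
n - \dr_r(\C) = \max\{\dim_{\Fq}(X\cap \Pi) : \Pi \text{ is an } \Fqm\text{-subspace of } \AA = \Fqm^k \text{ of codimension } r\}.
\]

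I would split the statement into three assertions. First, $\dr_1 > 0$: this is immediate, because a nonzero codeword has positive rank weight, equivalently any nonzero element of $X$ has positive $q$-support dimension. Second, $\dr_k = n$: the only $\Fqm$-subspace of $\AA$ of codimension $k$ is $\{\0\}$, so $\dim_{\Fq}(X\cap \{\0\}) = 0$, giving $\dr_k = n - 0 = n$.

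The main step is the strict inequality $\dr_r(\C) < \dr_{r+1}(\C)$ for $1 \leq r \leq k-1$. Let $\Pi_{r+1}$ be a codimension-$(r+1)$ $\Fqm$-subspace of $\AA$ achieving the maximum, i.e.\ $\dim_{\Fq}(X\cap \Pi_{r+1}) = n - \dr_{r+1}(\C)$. The non-degeneracy condition built into the definition of a $q$-system says $X$ is not contained in any hyperplane of $\AA$, so the $\Fqm$-span of $X$ is all of $\AA$; in particular $X \not\subseteq \Pi_{r+1}$, since $r+1 \geq 1$ makes $\Pi_{r+1}$ a proper subspace. Pick any $v \in X \setminus \Pi_{r+1}$ and set $\Pi_r = \Pi_{r+1} + \Fqm \cdot v$, which has codimension $r$. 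Then $X \cap \Pi_{r+1} \subsetneq X \cap \Pi_r$ because $v$ lies in the latter but not the former, whence $\dim_{\Fq}(X\cap \Pi_r) \geq \dim_{\Fq}(X\cap \Pi_{r+1}) + 1$. Plugging into the formula,
\[
n - \dr_r(\C) \geq \dim_{\Fq}(X\cap \Pi_r) \geq \dim_{\Fq}(X\cap \Pi_{r+1}) + 1 = n - \dr_{r+1}(\C) + 1,
\]
which rearranges to $\dr_r(\C) \leq \dr_{r+1}(\C) - 1$, i.e.\ the desired strict inequality.

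There is not really a hard obstacle here: the work is entirely done by the non-degeneracy of the $q$-system, which guarantees that any proper $\Fqm$-subspace of $\AA$ misses at least one vector of $X$, allowing us to strictly enlarge the intersection each time we pass from codimension $r+1$ to codimension $r$. The only mild subtlety is to note that the jump in the $\Fq$-dimension of the intersection is at least $1$ (not necessarily $m$), which is enough to conclude, and to remember to treat the endpoints $r=1$ and $r=k$ separately as above.
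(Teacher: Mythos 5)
Your proof is correct and follows essentially the same route as the paper: non-degeneracy of the $q$-system gives a vector $v\in X$ outside the optimal codimension-$(r{+}1)$ subspace, adjoining it yields a codimension-$r$ subspace whose intersection with $X$ is strictly larger, and the endpoints $r=1$ and $r=k$ are handled exactly as in the paper. No gaps.
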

\begin{proof}
First, we show that $d_r>0$ for any $r\leq k-1$, i.e.
\begin{align*}
\max\left\lbrace \right. & \dim_{\Fq} X \cap \Pi: \\
& \left. \Pi \text{ a subspace of codimension } r \text{ in } \AA \right\rbrace < n.
\end{align*}
By definition of $q$-system, $X$ is not contained in any hyperplane and thus not in any subspaces of codimension $i>0$. 
Therefore $\dim_{\Fq} X\cap \Pi<n$ for any $\Pi$ and $\M$.

Now, we want to show that for $1\leq r\leq k-1$, $\dr_r<\dr_{r+1}$. 
Suppose that $\dim_{\Fq} X\cap \Pi_{r+1}=n-\dr_{r+1}$, $\Pi_{r+1}$ of codimension $r+1\neq k$. 
By the first part of the proof, $n-\dr_{r+1}<n$. So, there is $P$ such that $X=\<P\>_{\Fq}\oplus_{\Fq} X_1$ such that $P\notin\Pi_{r+1}$. 
Now, take $\Pi_r = \<\Pi_{r+1},P\>_{\Fqm}$. Since the codimension of $\Pi_{r+1}$ is $r+1$, then the codimension of $\Pi_r$ is $r$. If 
\[
X\cap \Pi_{r+1} = \< P_1,\cdots,P_{n-\dr_{r+1}}\>_{\Fq},
\]
and $P\notin X\cap \Pi_{r+1}$, then
\[
X\cap \Pi_{r} = \< P_1,\cdots,P_{n-\dr_{r+1}},P\>_{\Fq}.
\]
Therefore $n-\dr_{r+1}<n-\dr_r$.

Finally, since having a codimension equal to $k$ means that the subspace is the zero space, then $\dr_k = n$. 
\qed
\end{proof}
As a consequence of the monotonicity and the Singleton bound, we have the following corollary.
\begin{cor}[Generalized Singleton bound]
Let $\C$ be a $[n,k]$-linear rank metric code over $\Fqm/\Fq$. Let $\dr_r(\C)$ be the generalized weight of $\C$, then
\[
\dr_r(\C)\leq n-k+r.
\]
\end{cor}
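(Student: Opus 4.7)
The plan is to extract the bound directly from the strict monotonicity established in Theorem~\ref{thm:6}. That theorem already supplies the two ingredients needed: the chain of strict inequalities $\dr_r(\C) < \dr_{r+1}(\C) < \cdots < \dr_k(\C)$ and the endpoint equality $\dr_k(\C) = n$. Nothing further about the code's structure should be required.

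First I would note that the generalized rank weights are integers, since by Definition~\ref{defn:5} they are expressed in terms of $\Fq$-dimensions of intersections of $X$ with $\Fqm$-subspaces of $\AA$. Consequently, each strict inequality $\dr_{i}(\C) < \dr_{i+1}(\C)$ upgrades to $\dr_{i+1}(\C) \geq \dr_i(\C) + 1$. Telescoping these $k - r$ inequalities from $i = r$ up to $i = k - 1$ yields $\dr_k(\C) \geq \dr_r(\C) + (k - r)$, and substituting $\dr_k(\C) = n$ rearranges to $\dr_r(\C) \leq n - k + r$, which is the desired generalized Singleton bound.

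There is essentially no obstacle: the argument is a one-line telescoping. I would remark that the classical Singleton bound $\dr_1(\C) \leq n - k + 1$, which is cited in the statement of the corollary as an ingredient, is in fact recovered as the $r = 1$ case of this very telescoping argument, so strictly speaking only monotonicity together with $\dr_k(\C) = n$ is used. This mirrors the Hamming-metric derivation in \cite{Wei91} of the generalized Singleton bound from the monotonicity of the generalized Hamming weights.
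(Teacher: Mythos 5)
Your proof is correct and follows essentially the same route as the paper, which states the corollary as an immediate consequence of the monotonicity theorem (strictly increasing integer weights with $\dr_k(\C)=n$ force $\dr_r(\C)\leq n-k+r$ by telescoping). Your additional observation that the classical Singleton bound is recovered as the $r=1$ case, rather than being a separate ingredient, is a valid and slightly cleaner reading of the same argument.
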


The next property is the duality theorem. The proof will follow the method in \cite{Wei91}.

\begin{thm}[Duality]\label{thm:7}
Let $\C$ be an $[n,k]$-linear rank metric code and let $\C^\perp$ be its dual code. Then 
\[
\{\dr_1(\C),\cdots,\dr_k(\C)\}\cup \{n+1-\dr_1(\C^\perp),\cdots, n+1-\dr_{n-k}(\C^\perp) \} = \{1,2,\cdots,n\}.
\]
\end{thm}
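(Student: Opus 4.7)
The plan is to follow Wei's original strategy from \cite{Wei91}, translated to the rank-metric setting using Theorem \ref{thm:4}. The first step is to reformulate the parity-check version of $d^R_r$: if $V = \{\M\}_{\Fq}$ and $V_{\Fqm}$ denotes its $\Fqm$-span in $\Fqm^n$, then by rank-nullity applied to $\H|_{V_{\Fqm}}$ the condition $i - r \geq \rank \H\M$ is equivalent to $r \leq \dim_{\Fqm}(V_{\Fqm}\cap \C)$. Setting
\[
N_{\C}(i) := \max\{\dim_{\Fqm}(V_{\Fqm}\cap \C)\colon V<_{\Fq}\Fq^n,\ \dim_{\Fq}V=i\},
\]
Theorem \ref{thm:4} thus reads $d^R_r(\C)=\min\{i : N_{\C}(i)\geq r\}$. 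Define $N_{\C^\perp}$ analogously.

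Next I would show $N_{\C}(i+1)-N_{\C}(i)\in\{0,1\}$: deleting one generator of $V$ drops $V_{\Fqm}$ by $\Fqm$-codimension one, so the intersection with $\C$ drops by at most one. Since $N_{\C}(0)=0$ and $N_{\C}(n)=k$, the set of "jump points" $J_{\C}\subseteq\{1,\dots,n\}$ where $N_{\C}$ strictly increases has exactly $k$ elements, and by the formula for $d^R_r(\C)$ above these jump points are precisely $\{d^R_1(\C),\dots,d^R_k(\C)\}$. The same applies to $\C^\perp$, giving $|J_{\C^\perp}|=n-k$.

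The crucial step is the duality identity $N_{\C}(i)=i+k-n+N_{\C^\perp}(n-i)$. For any $\Fq$-subspace $V\subseteq \Fq^n$ one has $(V_{\Fqm})^\perp=(V^\perp)_{\Fqm}$ (both $\Fqm$-spaces have dimension $n-\dim V$, and $V^\perp\subseteq(V_{\Fqm})^\perp$ by $\Fqm$-linear extension of the $\Fq$-orthogonality). Applying the standard formula $\dim(U\cap\C)=\dim U+\dim\C-n+\dim(U^\perp\cap\C^\perp)$ to $U=V_{\Fqm}$ yields
\[
\dim_{\Fqm}(V_{\Fqm}\cap\C)=\dim_{\Fq}V+k-n+\dim_{\Fqm}((V^\perp)_{\Fqm}\cap\C^\perp),
\]
and taking the maximum over $V$ (using that $V\mapsto V^\perp$ is a bijection between $\Fq$-subspaces of dimensions $i$ and $n-i$) gives the identity.

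From the identity, $N_{\C^\perp}(j)-N_{\C^\perp}(j-1)=1-\bigl(N_{\C}(n-j+1)-N_{\C}(n-j)\bigr)$, so $j\in J_{\C^\perp}$ iff $n+1-j\notin J_{\C}$. Equivalently, $\{n+1-d^R_s(\C^\perp):1\leq s\leq n-k\}=\{1,\dots,n\}\setminus J_{\C}$, which is exactly the claimed disjoint union. The main obstacle I expect is the third paragraph: establishing $(V_{\Fqm})^\perp=(V^\perp)_{\Fqm}$ cleanly and carrying the orthogonality formula across the $\Fq$/$\Fqm$ boundary without confusion between duals taken over different fields.
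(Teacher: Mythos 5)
Your proof is correct, and it takes a genuinely different route from the paper's, although both descend from Wei's argument in \cite{Wei91}. Every step checks out: the reformulation of Theorem \ref{thm:4} as $\dr_r(\C)=\min\{i:N_\C(i)\geq r\}$ is exactly the rank--nullity identity $\dim_{\Fqm}(V_{\Fqm}\cap\C)=i-\rank \H\M$; the unit-increment property of $N_\C$ together with $N_\C(0)=0$ and $N_\C(n)=k$ identifies the jump set with the weight hierarchy; and the step you flagged as delicate, $(V_{\Fqm})^\perp=(V^\perp)_{\Fqm}$, is fine because one inclusion is immediate from $\Fq$-bilinearity and both sides have $\Fqm$-dimension $n-\dim_{\Fq}V$. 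By contrast, the paper proves two targeted facts by explicit matrix surgery: first that $\dr_t(\C)\leq n-\dr_r(\C^\perp)$ for $t=k+r-\dr_r(\C^\perp)$, by arranging a parity-check matrix of $\C$ whose top rows generate an optimal subcode $\D<\C^\perp$ and applying Theorem \ref{thm:4} to the complementary columns; second that $n+1-\dr_r(\C^\perp)$ can never equal any $\dr_i(\C)$, by deriving a contradiction with monotonicity (Theorem \ref{thm:6}); it then concludes by counting, since the two sets are disjoint, have $k$ and $n-k$ elements respectively, and sit inside $\{1,\dots,n\}$. Your single global identity $N_\C(i)=i+k-n+N_{\C^\perp}(n-i)$ subsumes both facts at once: it replaces the index bookkeeping and the proof by contradiction with a one-line complementation of jump sets, and it makes the separate appeal to Theorem \ref{thm:6} unnecessary, since monotonicity falls out of the jump-set description for free. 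What the paper's version buys in exchange is that it stays entirely at the level of generator and parity-check matrices, never needing the preliminary translation of Theorem \ref{thm:4} into the language of $\Fq$-subspaces $V$ of $\Fq^n$ and their $\Fqm$-spans.
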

\begin{proof}
We claim that for $t = k+r-\dr_r(\C^\perp)$, $\dr_t(\C)\leq n-\dr_r(\C^\perp)$. 
To prove this claim, suppose that $\dr_r(\C^\perp) = \dim_{\Fq} \qS(\D)$ such that $\D<\C^\perp$ and $\dim \D = r$. 
If $\G^\D$ is the generator matrix of $\D$ and $\H$ is the parity check matrix of $\C$, then
\[
\H=
\left[
\begin{array}{c}
\G^\D \\
\hline
\H'
\end{array}
\right]
\]
By the definition of $\qS(\D)$, there is an invertible matrix $\M$ over $\Fq$ such that
\[
\H\M = 
\left[
\begin{array}{c|c}
\G_1 & \0 \\
\hline
\H_1 & \H_2
\end{array}
\right],
\]
where $\G_1 \in \Fqm^{r\times \dr_r(\C^\perp)}$. Indeed, we can choose $\G_1$ so that $\qS(\D) = \{\G_1\}_{\Fq}$ and hence we can find $\M$ such that $\G^\D \M = [\G_1|\0]$.

If we define $\M_s$ as the matrix obtained with the last $n-\dr_r(\C^\perp)$ columns of $\M$, 
then $n-k-r \geq \rank \H_2= \rank \H\M_s$ and $\M_s\in \Fq^{n\times (n-\dr_r(\C^\perp))}$ has rank $n-\dr_r(\C^\perp)$. 
Therefore, by Theorem \ref{thm:4}, $\dr_t(\C)\leq n-\dr_r(\C^\perp)$.

Next we prove that $n+1-\dr_r(\C^\perp)\neq \dr_i(\C)$ for all $i,r$ where these generalized rank weights are defined. Suppose the contrary. 
By the first part we have $\dr_t(\C)\leq n-\dr_r(\C^\perp)$. 
Thus $n+1-\dr_r(\C^\perp) = \dr_{t+j}(\C)$, $j>0$. 
By, Theorem \ref{thm:3}, there is a subcode $\D$ of $\C$ of dimension $t+j$ such that $\dim_{\Fq} \qS(\D) = n+1-\dr_r(\C^\perp)$. 
Thus, if $\G$ is the generator matrix of $\C$, then there is an invertible matrix $\M$ over $\Fq$ such that
\[
\G\M = 
\left[
\begin{array}{c|c}
\G_1 & \0 \\
\hline
\G_2 & \G_3
\end{array}
\right],
\]
where $\G_1 \in \Fqm^{(t+j)\times (n+1-\dr_r(\C^\perp))}$. 
Now define $\M_s$ to be from the last $\dr_r(\C^\perp)-1$ columns of $\M$. 
Thus $\rank \G\M_s=\rank \G_3\leq k-t-j$ such that $\M_s\in \Fq^{n\times (\dr_r(\C^\perp)-1)}$ and $\M_s$ has rank $\dr_r(\C^\perp)-1$. Again, by Theorem \ref{thm:4},
\[
\dr_{\dr_r(\C^\perp)-k+t+j-1}(\C^\perp)\leq \dr_r(\C^\perp)-1,
\]
i.e.
\[
\dr_{r+j-1}(\C^\perp)\leq \dr_r(\C^\perp)-1.
\]
This is in contradiction with the monotonicity in Theorem \ref{thm:6}.
\qed
\end{proof}

\section{Linear sets}\label{sec:6}
Linear sets are well known sets in the area of geometry. They generalize the concept of subgeometry of a projective space. They were used to construct blocking sets \cite{Lun99} and they were extensively studied. One can for example see \cite{Pol10,LV15} and the references therein. Recently, relations between rank metric codes and linear sets were studied, especially for MRD codes. In this section we give a summary of the notion of linear sets and we give relations between them and rank metric codes.

\begin{defn}
Let $\Om = PG(V,\Fqm) = \PP^{r-1}(\Fqm)$ be a projective space. A set $L$ of points in $\Om$ is called an $\Fq$-linear set of $\Om$ of rank $n$ if it is given by all the non-zero vectors of an $n$-dimensional $\Fq$-vector subspace $X$ of $V$ i.e.
\[
L = L_X:= \{\<\x\>_{\Fqm}\colon \x\in X\backslash\{\0\}.
\]
\end{defn}

Hence a linear set is just set of points defined by a $q$-system $X$. This already allows us to construct a linear set from a rank metric code. If $X$ has dimension $n$ over $\Fq$ then $L_X$ is said to have rank $n$. If $\L = PG(W,\Fqm)$ is a subspace of $\Om$ then $L_X\cap \L = L_{W\cap X}$ is also a linear set.

\begin{defn}\label{defn:10}
Let $L_X$ be an $\Fq$-linear set of $\Om$ of rank $n$ and define $\L = PG(W,\Fqm)$ as a subspace of $\Om$ of dimension $r$. We say that $\L$ has weight $w_{L_X}(\L)$ with respect to $L_X$ if $\dim_{\Fq}(W\cap X)=w_{L_X}(\L)$ i.e. if $L_{W\cap X}= \L\cap L_X$ has rank $w_{L_X}(\L)$.
\end{defn}

The weight of a linear set can be used to define the generalized weights of a rank metric code.

\begin{thm}
Let $\C$ be an $[n,k]$-linear code over $\Fqm/\Fq$ and suppose that $\G$ is a generator matrix of $\C$. Let $X$ be the $q$-system defined by the columns of $\G$ and define the $F_q$-linear set $L_X$ in $\Om = PG(V,\Fqm) = PG(k-1,\Fqm)$. Then the $r$-th generalized weights of $\C$ satisfy
\[
n - d_r(\C) = \max \{ w_{L_X}(\L)\colon \L \text{ is a subspace of codimension } r \text{ of } \Om \}.
\]
\end{thm}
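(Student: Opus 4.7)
The proof is essentially a translation between the affine and projective language, so the plan is to identify the two maxima as indexed over the same objects and then invoke Definitions \ref{defn:5}, \ref{defn:6}, and \ref{defn:10}.

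First I would observe the dictionary between $\Fqm$-subspaces of the affine space $\AA = \Fqm^k$ appearing in Definition \ref{defn:5} and projective subspaces of $\Om = PG(V, \Fqm) = PG(k-1, \Fqm)$. Under the identification $V = \AA$, an $\Fqm$-vector subspace $W \leq V$ of dimension $k - r$ is the same object as a subspace $\Pi \leq \AA$ of codimension $r$; and such $W$ corresponds to the projective subspace $\L = PG(W, \Fqm)$ of $\Om$ of projective dimension $k-1-r$, i.e.\ of projective codimension $r$. This matching is a bijection between the two index sets of the two maxima.

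Next, by Definition \ref{defn:10}, the weight of such a projective subspace is $w_{L_X}(\L) = \dim_{\Fq}(W \cap X)$, and under the identification above this is exactly $\dim_{\Fq}(X \cap \Pi)$. Applying Definition \ref{defn:5} to the $q$-system $X$ (which by Definition \ref{defn:6} computes the generalized rank weight of $\C$) yields
\[
n - d_r(\C) \;=\; \max\bigl\{\dim_{\Fq}(X \cap \Pi) : \Pi \text{ an } \Fqm\text{-subspace of codimension } r \text{ in } \AA\bigr\}.
\]
Rewriting each $\Pi$ as a $W$ via the bijection and using $\dim_{\Fq}(X \cap \Pi) = w_{L_X}(\L)$ gives the desired identity.

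Since the result follows directly from unwinding definitions, there is no real obstacle; the only point requiring care is being explicit that the notion of codimension used for $\L$ in $\Om$ (projective codimension) agrees with the codimension of $W$ in $V$ as $\Fqm$-vector space, so that the two index sets truly coincide. Once this is noted, the chain of equalities above completes the proof.
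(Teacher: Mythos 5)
Your proposal is correct and follows exactly the same route as the paper, which simply notes that the identity follows directly from Definition \ref{defn:5} and Definition \ref{defn:10}; you have merely spelled out the affine--projective dictionary that the paper leaves implicit. No issues.
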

\begin{proof}
This follows directly from the definition of generalized weights in Definition \ref{defn:5} and the notion of weight of subspaces in Definition \ref{defn:10}.
\qed
\end{proof}

\begin{rem}
In the first version of this paper, we were not aware of the notion of linear sets. Only after a reviewer told us about this notion, we believe that it is worth it to give the above relation between rank metric codes and linear sets. In fact, following the approach of Tsfasman and Vladut, linear sets are the $q$-analogue of the projective systems. In this regards, we may also call linear sets as {\em projective $q$-systems}.
\end{rem}

As we mentioned at the beginning of this section, linear sets and rank metric codes were already shown to be related \cite{Lun17,She16,CMPZ17,CMPZ19,SV19}. We give some of this correspondence.

\begin{defn}
An $\Fq$-linear set $L_X$ of $\Om$ or rank $n$ is scattered if all of its points have weight $1$. It is called a maximum scattered $\Fq$-linear set if it is of highest possible rank.
\end{defn}

In \cite{She16}, it was shown that maximum scattered $\Fq$-linear sets of $PG(1,q^n)$ correspond to $\Fq$-linear MRD code. Notice that the linearity of the rank metric code here is as an $\Fq$-vector space, whereas we only consider $\Fqm$-linear rank metric codes when we worked on the $q$-system. Furthermore, \cite{CMPZ17} shows that MRD codes can be constructed from every scattered linear set of rank $rm/2$ of $PG(r-1,q^m)$ where $rn$ is even. 

The construction from \cite{CMPZ17} is as follows. 

Let $X$ be an $(rm/2)$-dimensional $\Fq$-subspace of $V=V(r,q^m)$, $r$ even, and let $i=\max\{\dim_{\Fq}(X\cap \<v\>_{\Fqm}:v\in V\}$. Let $G:V\rightarrow W$ be an $\Fq$-linear function , with $W=V(rm/2,q)$ such that $\mathrm{Ker}\,G = U$.

Define $\C(X,G) = \{G\circ \tau_v\colon v\in V\}$, where $\tau_v\colon \lambda\in \Fqm \mapsto \lambda v\in V$.

\begin{thm}[\cite{CMPZ17}]
If $i<n$, then $\C(X,G)$ is an $\Fq$-linear rank metric code of dimension $rm$, dimension $m$ and minimum distance $m-i$. Moreover, $\C(X,G)$ is an MRD-code if and only if $L_X$ is a maximum scattered $\Fq$-linear set.
\end{thm}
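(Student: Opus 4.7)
The plan is to work with the explicit $\Fq$-linear evaluation map
\[
\phi : V \to \mathrm{Hom}_{\Fq}(\Fqm, W), \qquad v \mapsto G \circ \tau_v,
\]
whose image is $\C(X,G)$, and to read off the $\Fq$-dimension, the minimum rank distance and the MRD condition from the way each $\Fq$-line $\Fqm v \subset V$ intersects $\ker G = X$. After fixing $\Fq$-bases of $\Fqm$ and $W$, each codeword $G \circ \tau_v$ is naturally represented by a matrix in $\Fq^{(rm/2)\times m}$, so $\C(X,G)$ is an $\Fq$-linear matrix code of the expected shape.

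First I would pin down the dimension. The map $\phi$ is $\Fq$-linear by construction, and $\phi(v)=0$ holds exactly when $G(\lambda v) = 0$ for every $\lambda \in \Fqm$, that is, when $\Fqm v \subseteq X$. For $v \neq 0$ this would force $\dim_{\Fq}(\Fqm v \cap X) = m$, contradicting the hypothesis $i < m$ (reading the ``$i<n$'' in the statement as $i<m$, since $n$ is not otherwise introduced in the construction). Hence $\phi$ is injective and $\dim_{\Fq} \C(X,G) = \dim_{\Fq} V = rm$.

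Next I would compute the rank of a non-zero codeword. The $\Fq$-linear map $G \circ \tau_v : \Fqm \to W$ has image $G(\Fqm v)$, so applying rank--nullity to $G$ restricted to $\Fqm v$ gives
\[
\rank (G \circ \tau_v) \;=\; m - \dim_{\Fq}(\Fqm v \cap X) \;=\; m - w_{L_X}(\<v\>_{\Fqm}),
\]
using Definition \ref{defn:10} for the weight of the point $\<v\>_{\Fqm}$ of $\Om = PG(V,\Fqm)$. Minimising over non-zero $v$ and using the definition of $i$ as the maximum weight of a point of $L_X$ yields $\dr(\C(X,G)) = m-i$.

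Finally, for the MRD equivalence I would invoke the Singleton bound for $\Fq$-linear matrix codes recalled in the earlier remark: for a code in $\Fq^{(rm/2)\times m}$ one has $\dim_{\Fq}\C \leq (rm/2)(m - d + 1)$ as soon as $r \geq 2$. Plugging in the values already computed, the bound becomes $rm \leq (rm/2)(i+1)$, so equality (i.e.\ MRD) is equivalent to $i=1$, which is precisely the scattered condition on $L_X$. Since $X$ has $\Fq$-dimension $rm/2$, which is the maximum possible rank of a scattered $\Fq$-linear set in $PG(r-1,q^m)$ when $rm$ is even, being scattered here coincides with being maximum scattered. The only delicate point is the injectivity step: one must argue carefully that the hypothesis really prevents $\Fqm v \subseteq X$ for non-zero $v$; once that is settled, the remainder is a short rank--nullity calculation together with the matrix-code Singleton bound.
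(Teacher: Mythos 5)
The paper does not prove this statement at all: it is quoted verbatim (typos included) from \cite{CMPZ17}, so there is no in-paper argument to compare against. Your proof is correct and is essentially the standard one for this result. You rightly repaired the two defects in the paper's transcription of the construction (reading ``$i<n$'' as $i<m$, and $\mathrm{Ker}\,G=U$ as $\mathrm{Ker}\,G=X$); with those readings, the three steps go through exactly as you say. The kernel computation $\phi(v)=0 \iff \<v\>_{\Fqm}\subseteq X$ together with $i<m$ gives injectivity and hence $\dim_{\Fq}\C(X,G)=rm$; rank--nullity applied to $G$ restricted to $\<v\>_{\Fqm}$ gives $\rank(G\circ\tau_v)=m-w_{L_X}(\<v\>_{\Fqm})$ and hence $d=m-i$; and the matrix Singleton bound $\dim_{\Fq}\C\leq (rm/2)(m-d+1)$ (valid since $r\geq 2$ forces $rm/2\geq m$) turns the MRD condition into $i=1$, i.e.\ $L_X$ scattered.

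One point deserves flagging, though it is not a gap in the core equivalence. To pass from ``scattered'' to ``maximum scattered'' you assert that $rm/2$ is the largest possible rank of a scattered $\Fq$-linear set of $PG(r-1,q^m)$. That is true (it is the Blokhuis--Lavrauw bound), but it is an external theorem, not something that follows from anything in this paper or from your construction; as written you are using it silently. Since the paper's own definition of ``maximum scattered'' is ``scattered of highest possible rank,'' this input is genuinely needed for the stated equivalence, so it should be cited rather than treated as obvious. With that reference added, the proof is complete.
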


A further study of this correspondence can be found in \cite{SV19}, where they give a geometric interpretation. We would like to point out that the correspondence, in Lemma 2.2 of that paper, between linear sets and rank metric codes is similar to the relation between linear sets from a $q$-system and the corresponding rank metric code at the beginning of this section. The equivalence classes of rank metric codes and linear sets were studied in \cite{SV19}. The connection between the rank weight of a linear code and weight of hyperplanes with respect to a linear set were also given. With the $q$-system approach, we describe the higher rank weights, i.e. the generalized rank weights of a linear codes. And as we will see in the next section, the $q$-system approach allows us classify constant weight rank metric codes.

\section{Constant rank weight codes}\label{sec:7}
In this section, we show that the geometric approach helps studying rank metric codes. 
In particular we can easily classify constant rank weight codes.
First we want to show the following lemma which is useful to characterize constant rank weight codes.
\begin{lem}\label{lem:1}
Let $X\subset \Fqm^k$ be a $q$-system of parameters $[n,k,d]$. 
Suppose that there is an integer $l$ such that for any $\Fqm$-subspace $S$ of $\Fqm^k$ of dimension $r$, $\dim_{\Fq} S\cap X = l$. Then
\[
q^n = \left|\Fqm^k\cap X\right| = (q^l-1)\frac{q^{mk}-1}{q^{mr}-1} + 1.
\]
\end{lem}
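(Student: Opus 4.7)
The first equality is immediate since $X$ is an $n$-dimensional $\Fq$-subspace of $\Fqm^k$, so $\Fqm^k \cap X = X$ has $q^n$ elements. The real content is the second equality, which I would establish by a double counting argument on the set
\[
\mathcal{P} = \left\{ (P, S) : P \in X \setminus \{\mathbf{0}\},\ S \text{ an } \Fqm\text{-subspace of } \Fqm^k \text{ of dimension } r,\ P \in S \right\}.
\]

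Counting $\mathcal{P}$ by first fixing $P$: given a non-zero $P \in X$, the $\Fqm$-subspaces of $\Fqm^k$ of dimension $r$ containing $P$ are in bijection with the $(r-1)$-dimensional $\Fqm$-subspaces of $\Fqm^k/\langle P\rangle_{\Fqm}$, a space of dimension $k-1$ over $\Fqm$. Their number is the Gaussian binomial $\binom{k-1}{r-1}_{q^m}$, so
\[
|\mathcal{P}| = (q^n - 1)\binom{k-1}{r-1}_{q^m}.
\]

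Counting $\mathcal{P}$ by first fixing $S$: the hypothesis says every $r$-dimensional $\Fqm$-subspace $S$ of $\Fqm^k$ satisfies $\dim_{\Fq}(S \cap X) = l$, so $|S \cap X| - 1 = q^l - 1$. The total number of such $S$ is $\binom{k}{r}_{q^m}$, hence
\[
|\mathcal{P}| = (q^l - 1)\binom{k}{r}_{q^m}.
\]

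Equating the two expressions and applying the standard identity
\[
\binom{k}{r}_{q^m} = \binom{k-1}{r-1}_{q^m} \cdot \frac{q^{mk}-1}{q^{mr}-1}
\]
gives $q^n - 1 = (q^l - 1)\dfrac{q^{mk}-1}{q^{mr}-1}$, which rearranges to the claimed formula. There is no real obstacle here — the only subtlety is making sure the hypothesis is applied uniformly across all $r$-dimensional $\Fqm$-subspaces of $\Fqm^k$ (not merely those of some restricted form), which is exactly what is assumed. Implicitly the formula also forces $(q^{mr} - 1) \mid (q^l - 1)(q^{mk} - 1)$, a divisibility condition that will matter later but is automatic from the argument above.
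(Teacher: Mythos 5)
Your proof is correct and is essentially the same argument as the paper's: the paper's ``value function'' summed over all $r$-dimensional $\Fqm$-subspaces is exactly your double count of incidences $(P,S)$, with $L_r=\binom{k}{r}_{q^m}$ and $L_{r,1}=\binom{k-1}{r-1}_{q^m}$ written out as explicit products and the zero vector handled by an explicit correction term rather than by restricting to $P\neq\0$ from the start. No substantive difference.
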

\begin{proof}
We follow a method in \cite{LC09}. Define a value function on $\Fqm^k$ by
\[
v(\x) = 
\begin{cases}
1& \text{if } \x\in X,\\
0& \text{else}.
\end{cases}
\]
and extend it to any subset $S\subset\Fqm^k$ by $v(S) = \sum_{x\in S}v(S)$. 
Notice that $v(\0)=1$.  Let $L_r$ be the number of $r$-dimensional $\Fqm$-subspaces of $\Fqm^k$. 
Finally, for any fixed point $p\in \Fqm^k\backslash\{\0\}$, let $L_{r,1}$ be the number of $r$-dimensional $\Fqm$-subspaces of $\Fqm^k$ containing $p$. 
Then, it is easy to show that
\[
L_r = \frac{(q^{mk}-1)(q^{mk}-q^m)\cdots(q^{mk}-q^{m(r-1)})}{(q^{mr}-1)(q^{mr}-q^m)\cdots(q^{mr}-q^{m(r-1)})},
\]
and
\[
L_{r,1} = \frac{(q^{mk}-q^m)(q^{mk}-q^{2m})\cdots(q^{mk}-q^{m(r-1)})}{q^{m(r-1)}(q^{m(r-1)}-1)(q^{m(r-1)}-q^m)\cdots(q^{m(r-1)}-q^{m(r-2)})}.
\]
Let $S_1,\cdots,S_{L_r}$ be all the $r$-dimensional $\Fqm$-subspaces of $\Fqm^k$. Then
\begin{equation}\label{eq:3}
\sum_{i=1}^{L_r} v(S_i) = q^l L_r.
\end{equation}
Since any non-zero elements of $\Fqm^k$ appears exactly in $L_{r,1}$ $\Fqm$-subspaces of dimension $r$ and $\0$ appears in each subspaces, then
\[
\sum_{i=1}^{L_r} v(S_i) =L_{r,1} v\left(\Fqm^k\backslash\{\0\}\right) + L_r.
\]
Therefore,
\begin{equation}\label{eq:4}
\sum_{i=1}^{L_r} v(S_i) =L_{r,1} v\left(\Fqm^k\right) + L_r - L_{r,1}.
\end{equation}
Combining Equations \eqref{eq:3} and \eqref{eq:4}, we get our result.
\qed
\end{proof}

Let $\C$ be an $[n,k,d]$-linear rank metric code over $\Fqm/\Fq$. 
Recall that a constant rank weight code is a linear code such that all non-zero codewords have the same rank weight. 
If $k=1$, then it is obvious that $\C$ is a constant rank weight code. 
Thus for the remaining part of this section, we assume that $k>1$. 
Suppose that the generator matrix of $\C$ is $\G$. Let $X$ be the $q$-system corresponding to $\C$, i.e. $X$ is an $\Fq$-subspace of $\AA=\Fqm^k$. 
Suppose that $\G^\D$ is a generator matrix of an $r$-dimensional subcode $\D<\C$. Then a generator matrix for $\D$ is $\G^\D = \M_D \G$, with $\M_\D\in \Fqm^{r\times k}$.
Define 
\[
S_\D = \{ \x\in \AA:\quad \M_\D\x = \0\}.
\]
Then $\dim_{\Fqm} S_\D = k-r$. In fact this relation gives a one-to-one correspondence between subspaces of $\AA$ of dimension $k-r$ and subcodes of $\C$ of dimension $r$. 
Moreover,
\[
n-\dim_{\Fq} S_\D \cap X= \dim_{\Fq} \qS(\D). 
\]

Since we have a constant rank weight code, then $\dim_{\Fq} \qS(\D)=d$, for any subcode of dimension $1$ of $\C$. Therefore, by the above correspondence,
$\dim_{\Fq} S \cap X = n-d$ for any hyperplane $S$ of $\Fqm^k$. Now, we choose $l=n-d$ and $r=k-1$.
By Lemma \ref{lem:1},
\begin{equation}\label{eq:5}
q^n\left(q^{m(k-1)}-1\right) = q^{mk+l}-q^l-q^{mk}+q^{m(k-1)}.
\end{equation}
We have the following properties.
\begin{itemize}
\item $1<k\leq n\leq mk$,
\item $0<l = n-d< n$.
\end{itemize}
If $l<m(k-1)$, then $l< mk$ and Equation \eqref{eq:5} gives
\[
q^{n-l}\left(q^{m(k-1)}-1\right) = q^{mk}-1-q^{mk-l}+q^{m(k-1)-l}.
\]
But then $q$ divides the LHS but not the RHS of the equation. Thus by contradiction, $l\geq m(k-1)$. However, if $l> m(k-1)$, then
\[
q^{n-m(k-1)}\left(q^{m(k-1)}-1\right) = q^{m+l}-q^{l-m(k-1)}-q^{m}+1.
\]
Since $q$ does not divide the RHS, then $n=m(k-1)$. But then $l>n$ which is contrary to $l< n$. So at the end
\[
l=m(k-1).
\]
But with Equation \eqref{eq:5}, this implies that
\[
q^n(q^l-1) = q^{mk}(q^l - 1).
\]
Since $l>0$, then $n=mk$. So, in fact $X = \AA=\Fqm^k$. In the following, we show that we indeed have a constant rank weight code for which some parameters are studied.

A particular class of linear codes in the Hamming metric are the class of Hadamard codes. 
These codes, for a particular dimension $k$ over $\Fq$, are constructed in such a way that all elements of $\Fq^k$ make the columns of the generator matrix. 
Taking $X=\AA=\Fqm^k$ generalize this construction in the rank metric setting and using the geometric approach we can easily compute the generalized weight of such code.

Let $\Fqm/\Fq$ be field extension of degree $m$. Let $k$ be a positive integer and Let $X=\Fqm^k$. 
Since $X$ is a vector space of dimension $k$ over $\Fqm$, then it is also a vector space of dimension $mk$ over $\Fq$. 
Let $n=mk$, then $X$ defines an $[n,k,d]$ $q$-system, which is given in the next theorem.
\begin{thm}
Let $X=\Fqm^k$ be an $[n,k,d]$ $q$-system defined as above. The generalized rank weights of $X$ are given by
\[
\dr_r(X) = mr.
\]
In other words, $d = m$.
\end{thm}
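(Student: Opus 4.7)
The plan is to apply Definition \ref{defn:5} directly, exploiting the fact that $X$ has been chosen to equal the entire ambient space $\AA = \Fqm^k$.

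First I would observe that for any $\Fqm$-subspace $\Pi$ of $\AA$, the intersection $X \cap \Pi$ simplifies dramatically: since $X = \AA$, we have $X \cap \Pi = \Pi$. Hence the maximum in Definition \ref{defn:5} is taken over a constant — namely $\dim_{\Fq} \Pi$ — as $\Pi$ ranges over $\Fqm$-subspaces of codimension $r$.

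Next I would compute this $\Fq$-dimension. If $\Pi$ has $\Fqm$-codimension $r$, then $\dim_{\Fqm} \Pi = k - r$, and since $\Fqm$ is a degree-$m$ extension of $\Fq$, we obtain $\dim_{\Fq} \Pi = m(k-r)$. This value is independent of the choice of $\Pi$, so the maximum equals $m(k-r)$.

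Finally, substituting back into Definition \ref{defn:5} and using $n = mk$, I get
\[
n - \dr_r(X) = m(k-r) = mk - mr = n - mr,
\]
so $\dr_r(X) = mr$. Setting $r = 1$ gives $d = \dr_1(X) = m$. There is no real obstacle here — the argument reduces to a direct dimension count — but it is worth remarking that this computation also confirms consistency with Lemma \ref{lem:1}: with $l = n - d = m(k-1)$ and $r = k - 1$, one recovers the formula $q^n = (q^l - 1)(q^{mk}-1)/(q^{m(k-1)}-1) + 1 = q^{mk}$, matching $|X| = q^{mk}$, and in fact the analysis preceding this theorem already forced $n = mk$ and $l = m(k-1)$ for any constant rank weight code, so this construction is the unique one (up to equivalence) and the present theorem merely records its generalized weight spectrum.
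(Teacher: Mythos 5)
Your proof is correct and follows exactly the paper's own argument: since $X = \AA$, the intersection $X \cap \Pi$ is all of $\Pi$, whose $\Fq$-dimension is $m(k-r)$, giving $\dr_r(X) = mr$. The concluding consistency check against Lemma \ref{lem:1} is a pleasant extra but not needed.
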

\begin{proof}
By definition 
\begin{align*}
n-\dr_r(X) := \max\left\lbrace \right.& \dim_{\Fq} X \cap \Pi: \\
& \left. \Pi \text{ an } \Fqm\text{-subspace of codimension } r \text{ in } \AA\right\rbrace.
\end{align*}
Notice that $\AA = X$ and therefore, 
\[
n-\dr_r(X) = (k-r)m.
\]
Therefore $\dr_r(X) = mr$.
\qed
\end{proof}

\begin{defn}
The linear code corresponding to the projective system $X=\Fqm^k$ is called the Hadamard rank metric code which we denote by $\HC_1(q,m,k)$. 
It has parameters $[mk,k,m]$ and  it has generalized weights $\dr_r(X) = mr$.
\end{defn}

\begin{cor}
The Hadamard rank metric code $\HC_1(q,m,k)$ is a constant rank weight code i.e. all the codewords have rank weight $m$.
\end{cor}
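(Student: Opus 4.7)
The plan is to combine the preceding theorem with the elementary upper bound on rank weight in $\Fqm^n$. Recall that for any vector $\x = (x_1,\ldots,x_n) \in \Fqm^n$, its rank weight is defined as $\dim_{\Fq}\<x_1,\ldots,x_n\>_{\Fq}$, which is the dimension of an $\Fq$-subspace of $\Fqm$. Since $\Fqm$ has dimension $m$ over $\Fq$, this gives $\rank\x \leq m$ for every $\x \in \Fqm^n$, regardless of the code. This is a universal ceiling that does not depend on the code structure.

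On the other hand, the theorem just proved establishes $\dr_1(\HC_1(q,m,k)) = m$, which by definition means the minimum rank weight of a non-zero codeword is $m$. So every non-zero codeword has rank weight at least $m$.

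Squeezing a non-zero codeword between the lower bound $m$ (minimum distance) and the upper bound $m$ (ambient dimension), I conclude that every non-zero codeword has rank weight exactly $m$, which is precisely the definition of a constant rank weight code. There is no real obstacle here; the only subtlety worth stating explicitly is the observation that rank weight is bounded above by $m = \dim_{\Fq}\Fqm$, after which the corollary is immediate from the preceding theorem.
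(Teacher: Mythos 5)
Your proposal is correct and follows exactly the same argument as the paper: the preceding theorem gives $\dr_1 = m$ as a lower bound on the rank weight of non-zero codewords, and the trivial bound $\rank \x \leq \dim_{\Fq}\Fqm = m$ gives the matching upper bound. Nothing is missing.
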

\begin{proof}
We have seen that $\dr(\HC_1(q,m,k))=m$. So, $\forall \x\in \HC_1,\; \rank \x\geq m$. 
But since the alphabet is over $\Fqm$, then $\rank \x\leq m$. Thus $\forall \x\in \HC_1,\; \rank \x= m$.
\qed
\end{proof}

It is interesting that this code is optimal in the sense that it reaches the bound for rank metric codes with such parameters. 
Namely for an $[n,k,d]$-linear code over $\Fqm$ with have $k\leq (n/m)(m-d+1)$ and here we have an equality. 
For a proof of such bound, one can view the code as $\Fq$-linear code where the codewords are matrices (see \citep{Del78} for example). 
Notice also that this code is linear over $\Fqm$ but not only over $\Fq$.

Taking the dual, we have the following.
\begin{defn}
The Hamming rank metric code $\HC_2(q,m,k)$ is the dual of $\HC_1(q,m,k)$.
\end{defn}
Using the duality from Theorem \ref{thm:7}, we get the following property of Hamming rank metric codes.
\begin{thm}
The Hamming rank metric code $\HC_2(q,m,k)$ has parameters $[mk,(m-1)k,2]$. Moreover the generalized weight hierarchy is given by
\[
\left\lbrace n+1-i: \quad 1\leq i<km,\; m \nmid i \right\rbrace.
\]
\end{thm}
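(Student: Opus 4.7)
The plan is a direct application of the duality theorem (Theorem \ref{thm:7}), since $\HC_2(q,m,k)$ is by definition the dual of $\HC_1(q,m,k)$, and the full generalized weight hierarchy of $\HC_1$ has already been computed in the preceding theorem.

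First, I would read off the basic parameters. The length of the dual is the same as that of $\HC_1$, namely $n = mk$, and the dimension of the dual of a $[mk, k]$ code is $mk - k = (m-1)k$. The minimum distance will fall out of the hierarchy at the end.

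Next, I would apply Theorem \ref{thm:7} with $\C = \HC_1(q,m,k)$. We have just established $\dr_r(\HC_1) = mr$ for $1 \le r \le k$, so
\[
\{\dr_1(\HC_1), \ldots, \dr_k(\HC_1)\} = \{m, 2m, \ldots, km\}.
\]
The duality theorem asserts that this set together with $\{n+1 - \dr_j(\HC_2) : 1 \le j \le n-k\}$ partitions $\{1, 2, \ldots, n\}$. Since $n = km$, taking complements gives
\[
\{n+1 - \dr_j(\HC_2) : 1 \le j \le (m-1)k\} = \{1, \ldots, km\} \setminus \{m, 2m, \ldots, km\} = \{i : 1 \le i < km,\ m \nmid i\}.
\]
Re-indexing by $i \mapsto n+1-i$ yields exactly the claimed generalized weight hierarchy $\{n+1 - i : 1 \le i < km,\ m \nmid i\}$.

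Finally, for the minimum distance I would observe that $\dr_1(\HC_2)$ is the minimum element of this set, hence corresponds to the largest admissible $i$. The largest integer strictly less than $km$ that is not divisible by $m$ is $km-1$ (assuming $m \ge 2$), giving $\dr_1(\HC_2) = n+1-(km-1) = 2$. As a sanity check, the complement set has cardinality $km - k = (m-1)k$, matching the number of nonzero generalized weights expected for a code of dimension $(m-1)k$. There is no real obstacle here: the result is a mechanical consequence of duality together with the explicit hierarchy of the Hadamard rank metric code.
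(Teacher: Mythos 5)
Your proposal is correct and follows exactly the route the paper indicates: the paper gives no written-out proof for this theorem, merely prefacing it with ``Using the duality from Theorem~\ref{thm:7},'' and your argument supplies precisely those details (complementing $\{m,2m,\dots,km\}$ in $\{1,\dots,km\}$, re-indexing, and reading off $\dr_1=2$ from the largest non-multiple of $m$ below $km$). The cardinality sanity check $km-k=(m-1)k$ is a nice touch that the paper omits.
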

Having a minimum distance $2$, the code $\HC_2(q,m,k)$ is not really of a particular interest for error correcting as it can only detect error of rank $1$. However, the generalized weights can be useful.

To conclude this section, we present the following classification of non-degenerate constant weight linear rank metric codes.
\begin{thm}\label{thm:10}
Let $\C$ be an $[n,k,d]$-non degenerate linear code over $\Fqm/\Fq$.
\begin{enumerate}
\item If $k=1$, then $\C=\<(a_1,\cdots,a_n)\>_{\Fqm}$ such that $\rank (a_1,\cdots,a_n) = d$.
\item If $k>1$, then $n=mk$, $\dr_r(\C) = mr$ and the columns of the generator matrix $\G$ of $\C$ is made of a basis of $\Fqm^k$ as a vector space over $\Fq$.
\end{enumerate}
\end{thm}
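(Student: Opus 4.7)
The plan is to reduce this theorem to the machinery already developed in the section, treating $k=1$ and $k>1$ separately.

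For the case $k=1$, I would argue directly: any codeword is of the form $a\,(a_1,\ldots,a_n)$ for $a\in\Fqm$. Since multiplication by a nonzero $a\in\Fqm$ is an $\Fq$-linear automorphism of $\Fqm$, we have $\rank(a(a_1,\ldots,a_n)) = \rank(a_1,\ldots,a_n)$ for every $a\neq 0$. So the constant weight property is automatic in dimension one, and $d$ is forced to equal $\rank(a_1,\ldots,a_n)$. This case needs nothing beyond the definitions.

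For the case $k>1$, I would move to the $q$-system picture: let $X = \{\G\}_{\Fq}$ be the $q$-system associated to $\C$, which lives in $\AA = \Fqm^k$. The constant weight hypothesis says $\dim_{\Fq}\qS(\D) = d$ for every one-dimensional subcode $\D<\C$. Using the correspondence $\D \leftrightarrow S_\D$ set up just before the theorem, where $\dim_{\Fq} S_\D \cap X = n - \dim_{\Fq}\qS(\D)$, this translates to: for every hyperplane $S$ of $\AA$, $\dim_{\Fq}(S\cap X) = n-d$. This is exactly the hypothesis of Lemma \ref{lem:1} with $l = n-d$ and $r = k-1$, yielding Equation \eqref{eq:5}.

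The main obstacle — and the heart of the proof — is extracting $l = m(k-1)$ from Equation \eqref{eq:5}. This I would handle by the divisibility argument laid out immediately before the theorem: if $l < m(k-1)$, then dividing \eqref{eq:5} through by $q^l$ gives an equality where $q$ divides the left-hand side but not the right, a contradiction; if $l > m(k-1)$, a similar divisibility obstruction forces $n = m(k-1)$, contradicting $l<n$. Hence $l=m(k-1)$, and substituting back into \eqref{eq:5} yields $q^n = q^{mk}$, i.e., $n=mk$. Since $X$ is then an $mk$-dimensional $\Fq$-subspace of the $mk$-dimensional $\Fq$-space $\AA$, we conclude $X = \AA = \Fqm^k$.

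Finally, I would note that $X = \Fqm^k$ means precisely that the columns of $\G$ form an $\Fq$-basis of $\Fqm^k$, which is the second assertion about the structure of $\G$. The generalized rank weight formula $\dr_r(\C) = mr$ is then immediate from the theorem on the Hadamard rank metric code $\HC_1(q,m,k)$ proved earlier in this section, since $\C$ coincides with $\HC_1(q,m,k)$ up to equivalence. This completes the classification.
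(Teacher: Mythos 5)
Your proposal is correct and follows essentially the same route as the paper: the hyperplane-to-subcode correspondence reduces the constant-weight hypothesis to the counting identity of Lemma~\ref{lem:1}, the divisibility analysis of Equation~\eqref{eq:5} forces $l=m(k-1)$ and hence $n=mk$ and $X=\Fqm^k$, and the generalized weights then follow from the Hadamard rank metric code computation. The only addition is that you spell out the (trivial) $k=1$ case, which the paper dismisses as obvious.
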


\begin{rem}\label{rem:3}
If the linear code $\C$ is degenerate i.e. the columns of its generator matrix $\F$ are linearly independent, 
then the code is equivalent to a linear code with generator matrix of the form $[\G'|\0]$ where, $\G'$ defines a non-degenerate rank metric code $\C'$. 
Thus we can also use Theorem \ref{thm:10} on $\C'$ in order to classify degenerate constant weight linear rank metric code $\C$.
\end{rem}

\section{Conclusion}
In this work, we considered a geometric approach of linear rank metric codes via the notion of $q$-systems which are similar to linear sets. 
We have redefined the notion of generalized rank weight and we gave new proofs of some of their properties. 
The method also helps us to completely classify constant rank weight codes. We give a construction of such codes. 
These codes are analogous to the Hadamard codes in the Hamming metric setting. 
As a future work, we want to explore the properties of rank metric codes using this geometric approach. For instance we want to study the generalized weight of $q$-cyclic rank metric codes as it was similarly studied for cyclic Hamming metric codes. We want to use the projective setting with linear sets to find linear codes whose generalized weights can be easily computed using the geometric approach.
We also want to generalize this geometric approach into rank metric codes in the Delsarte setting \cite{Del78} i.e. we want to consider rank metric codes as subspaces of matrices.

\subsection*{Acknowledgments}
I would like to thank Rakhi Pratihar and Prof. Sudhir Ghorparde for their valuable comments and suggestions on this work. I also thank the anonymous reviewer who introduced me to the notion of linear sets.

\bibliography{references}   

\end{document}